\numberwithin{equation}{section}
\newtheorem{theorem}{Theorem}[section]
\newtheorem{lemma}[theorem]{Lemma}
\theoremstyle{definition}
\newtheorem{definition}[theorem]{Definition}
\newtheorem{remark}[theorem]{Remark}
\newtheorem{algorithm}[theorem]{Algorithm}
\newtheorem{assumption}[theorem]{Assumption}
\newcommand{\ind}{1\hspace{-2.1mm}{1}} 
\newcommand{\eps}{\varepsilon}
\newcommand{\RR}{\mathbb{R}}
\newcommand{\EE}{\mathbb{E}}
\newcommand{\BS}{\mathrm{BS}}
\newcommand{\CC}{\mathrm{C}}
\newcommand{\TT}{\mathrm{T}}
\newcommand{\St}{\widetilde{S}}
\newcommand{\yy}{\mathrm{y}}
\newcommand{\Tf}{\mathfrak{T}}
\newcommand{\Kf}{\mathfrak{K}}
\newcommand{\Km}{\mathrm{K}}
\newcommand{\Pf}{\mathfrak{P}}
\newcommand{\ww}{\mathrm{w}}
\newcommand{\DD}{\mathrm{D}}
\newcommand{\xx}{\mathrm{x}}
\newcommand{\pp}{\mathrm{p}}
\newcommand{\qq}{\mathrm{q}}
\newcommand{\D}{\mathrm{d}}
\newcommand{\E}{\mathrm{e}}
\newcommand{\Nn}{\mathcal{N}}
\newcommand{\Pp}{\mathcal{P}}
\newcommand{\Xx}{\mathcal{X}}
\newcommand{\AAA}{\mathrm{A}}
\newcommand{\NN}{\mathbb{N}}
\newcommand{\argmin}{\mathrm{arg}\,\mathrm{min}}
\newcommand{\argmax}{\mathrm{arg}\,\mathrm{max}}
\newcommand{\Oo}{\mathcal{O}}
\begin{document}

\title{No-arbitrage bounds for the forward smile given marginals}
\author{Sergey Badikov, Antoine Jacquier, Daphne Qing Liu and Patrick Roome}
\address{Department of Mathematics, Imperial College London}
\email{sergey.badikov08@imperial.ac.uk, a.jacquier@imperial.ac.uk, q.liu12@imperial.ac.uk, p.roome11@imperial.ac.uk}
\date{\today}
\keywords {martingale optimal transport, robust bounds, forward-start, heston}
\subjclass[2010]{91G20, 91G80, 90C46, 90C05, 90C34}
\thanks{
The authors are indebted to Claude Martini for stimulating discussions,
and to the anonymous referees for helpful suggestions.
AJ acknowledges financial support from the EPSRC First Grant EP/M008436/1.
}

\maketitle
\begin{abstract}
We explore the robust replication of forward-start straddles given quoted (Call and Put options) market data. 
One approach to this problem classically follows semi-infinite linear programming arguments, 
and we propose a discretisation scheme to reduce its dimensionality and hence its complexity.
Alternatively, one can consider the dual problem, consisting in finding optimal martingale measures
under which the upper and the lower bounds are attained. 
Semi-analytical solutions to this dual problem were proposed by Hobson and Klimmek~\cite{HK13} 
and by Hobson and Neuberger~\cite{HN08}.
We recast this dual approach as a finite dimensional linear programme, 
and reconcile numerically, in the Black-Scholes and in the Heston model, the two approaches.
\end{abstract}

\section{Introduction}

Since David Hobson's seminal contribution~\cite{H98}, an important stream of literature has focused on developing model-free sub(super)-hedges for multi-dimensional derivative products or path-dependent options, given a set of European option instruments.
The key observation is that the model-free sub(super)-hedging cost is closely related to the Skorokhod Embedding problem 
(see the exhaustive survey papers by Hobson~\cite{H11} and Ob\l\'oj~\cite{O4} 
in the context of mathematical finance).
Recently, Beiglb\"ock, Henry-Labord\`ere and Penkner~\cite{BHP11} studied this problem in the framework of martingale optimal transport theory.
Assuming that European Call/Put option prices are known for all strikes and some maturities 
(equivalently the marginal distributions of the asset price are known at these times), 
optimal transport then yields a no-arbitrage range of prices of a derivative product consistent with these marginal distributions.
The primal problem endeavours to find the supremum, or the infimum, 
of these prices over all joint martingale measures (transport plans) consistent with the marginals.
The dual problem, in turn, seeks to find the `best' sub(super)-replicating portfolio;
this dual formulation has the advantage of a natural financial interpretation and can be cast as an (infinite) 
linear programme, amenable to numerical implementations, as proposed in~\cite{PHL11}.

Forward-start options (of Type I and of Type II) are among the simplest products amenable to these techniques.
The upper bound price for the at-the-money Type-II forward-start straddle 
(with payoff $|S_{t+\tau} - S_{t}|$ for some $t, \tau>0$)
was computed by Hobson and Neuberger~\cite{HN08},
where the support of the optimal martingale measure is a binomial tree.
Unfortunately the optimal measure and the associated super-hedging portfolio are not available analytically.
The martingale optimal transference plan for the lower bound price of the at-the-money Type-II forward-start straddle has been characterised semi-analytically by Hobson and Klimmek~\cite{HK13}, 
and the transference plan (supported along a trinomial tree) is found by solving a set of coupled ODEs.
Recently, Campi, Laachir and Martini~\cite{CLM14} studied the change of numeraire in these two-dimensional optimal transport problems and showed 
that, under some technical conditions, the lower bound for the Type-I at-the-money forward-start straddle is also attained by the Hobson-Klimmek transference plan.

In this paper, we numerically investigate the no-arbitrage bounds of the Type-II forward-start straddle.
The infinite-dimensional linear programme corresponding to the optimal transport problem
is presented in Section~\ref{sec:probformopttransp}.
Section~\ref{sec:noarbdiscretopttrans} focuses on a reduction of the dimension of the problem,
by discretising the support of the marginal distributions at times~$t$ and~$t+\tau$, which respects the consistency of the primal and the dual problems with observed option prices, and yields robust numerical results.
Our discretisation method differs from that of Henry-Labord\`ere~\cite{PHL11},
and requires far fewer points, thus reducing the complexity of the finite-dimensional linear programmes to be solved,
thereby improving the algorithmic speed.
In Section~\ref{sec:OTnumerics}, we specialise our computation of the upper and lower bounds to the cases where the marginal distributions are generated from 
a Black-Scholes model (lognormal marginals) and from the Heston stochastic volatility model. 
In the lower bound at-the-money case we numerically solve, in Section~\ref{sec:primalnumerics}, 
the coupled ordinary differential equations associated with the Hobson-Klimmek transference plan, 
and show that it is in striking agreement with the LP solution of the dual problem.
In Section~\ref{sec:numericTPs} we numerically solve the primal problem 
and provide the optimal transport plans for a large range of strikes.
The transport plans are only known 
for the at-the-money case~\cite{HK13,HN08}, and we highlight numerical evidence that the optimal transference plans are more subtle in these cases, 
and appear to be a combination of the lower and upper bound at-the-money plans.
Intuitively, the extremal measure yields a price corresponding either to the maximum or to the minimum value of the product.
In the forward-start option case, this extremal measure maximises or minimises the kurtosis of the conditional distribution of the asset price process (see in particular Sections~\ref{sec:primalnumerics} and~\ref{sec:numericTPs} and~\cite{HK13,HN08}).
Therefore a choice of a model that misspecifies the kurtosis might lead to wrong risk exposure profile. 


In the examples explored in Section~\ref{sec:OTnumerics} the range of forward smiles consistent 
with the marginal laws is large, and of the same magnitude as in~\cite[Section 5.5]{PHL11}.
This wide range of prices supports the claim that using European vanilla options to replicate forward volatility-dependent claims seems illusory.
Forward-start options should therefore be seen as fundamental building blocks for exotic option pricing and not decomposable (or approximately decomposable) into European options.
Forward-start options are usually available from the stripping of cliquet options, on OTC markets.
Cliquet options have been a popular instrument on Equity markets~\cite{Buehler},
and also as a hedging tool for insurance companies (the so-called FLEX Index options, 
the details of which can be found on the CBOE website).
Models used for forward volatility-dependent exotics should be able to calibrate forward-start option prices 
and should produce realistic forward smiles consistent with trader expectations and observable prices.

\textbf{Notations:}
$B_b(\RR)$ denotes the set of bounded measurable functions on the real line,
$\RR_+:=[0,\infty)$ represents the positive half-line;
$\left\langle\cdot,\cdot \right\rangle$ denotes the Euclidean inner product,
and $\|\cdot\|$ the $L^1$ norm.

\section{Problem formulation}\label{sec:probformopttransp}

We consider an asset price process $(S_t)_{t\geq 0}$ starting at $S_0=1$,
and we assume that interest rates and dividends are null. 
For $t,\tau>0$,
let $\mu$ and $\nu$ denote the distributions of $S_t$ and $S_{t+\tau}$,
assumed to have common finite mean, supported on $[0,\infty)$, 
and absolutely continuous with respect to the Lebesgue measure.
We say that the bivariate law $\zeta$ is a martingale coupling, and write $\zeta\in\mathcal{M}(\mu,\nu)$, 
if~$\zeta$ has marginals~$\mu$ and~$\nu$ and $\int_{y\in\RR_+}(y-x)\zeta(\D x, \D y)=0$ for each $x\in\RR_{+}$.
Following~\cite{S65}, we shall say that~$\mu$ and~$\nu$ are in convex, or balayage, order
(which we denote $\mu\preceq\nu$) if they have equal means and satisfy 
$\int_{\RR_+}(y-x)^{+}\mu(\D y) \leq \int_{\RR_+}(y-x)^{+}\nu(\D y)$ for all~$x\in\RR$.
This assumption ensures (see~\cite{B12}) that the set $\mathcal{M}(\mu,\nu)$ is not empty.
Our objective is to find the tightest possible lower and upper bounds, 
consistent with the two marginal distributions, for the forward-start straddle payoff $|S_{t+\tau}-\Kf S_{t}|$ with $\Kf>0$.
To this end we define our primal problems:
\begin{equation}\label{eq:primal}
\underline{\Pp}(\mu,\nu) := \inf_{\zeta\in\mathcal{M}(\mu,\nu)}\int_{\RR_+^2}|y-\Kf x|\zeta(\D x, \D y), 
\qquad\text{and}\qquad
\overline{\Pp}(\mu,\nu) := \sup_{\zeta\in\mathcal{M}(\mu,\nu)}\int_{\RR_+^2}|y-\Kf x|\zeta(\D x, \D y).
\end{equation}
We now define the following sub- and super-replicating portfolios:
\begin{align*}
\underline{Q} & := \displaystyle \left\{(\psi_{0},\psi_1,\delta) \in L^1(\mu)\times L^1(\nu)\times B_b(\RR_+):
\psi_{1}(y)+\psi_{0}(x)+\delta(x)(y-x)\leq|y-\Kf x|, \text{for all }x,y \in\RR_{+}\right\},\\
\overline{Q} & := \displaystyle \left\{(\psi_{0},\psi_1,\delta) \in L^1(\mu)\times L^1(\nu)\times B_b(\RR_+):
\psi_{1}(y)+\psi_{0}(x)+\delta(x)(y-x)\geq|y-\Kf x|, \text{for all }x,y \in\RR_{+}\right\}.
\end{align*}
Clearly if $(\psi_0,\psi_1,\delta) \in \underline{Q}$ (respectively $\in \overline{Q}$) then
$\int_{\RR_+^2}|y-\Kf x|\zeta(\D x, \D y) \geq (\leq) 
\int_{\RR_+}\psi_0(x) \mu(\D x) + \int_{\RR_+}\psi_1(y) \nu(\D y)$.
The dual problems are then defined as the supremum (infimum) over all sub(super)-replicating portfolios:
\begin{equation}\label{eq:dual}
\begin{array}{ll}
\displaystyle \int_{\RR_+^2}|y-\Kf x|\zeta(\D x, \D y) \geq
\sup_{(\psi_0,\psi_1,\delta) \in \underline{Q}}
\left\{ \int_{\RR_+}\psi_0(x) \mu(\D x) + \int_{\RR_+}\psi_1(y) \nu(\D y) \right\}
 =: \underline{\mathcal{D}}(\mu,\nu), & \\ 
\displaystyle \int_{\RR_+^2}|y-\Kf x|\zeta(\D x, \D y) \leq
\inf_{(\psi_0,\psi_1,\delta) \in \overline{Q}}
 \left\{\int_{\RR_+}\psi_0(x) \mu(\D x) + \int_{\RR_+}\psi_1(y) \nu(\D y) \right\}
 =: \overline{\mathcal{D}}(\mu,\nu). &
\end{array}
\end{equation}
In~\cite[Theorem 1 and Corollary 1.1]{BHP11}, the authors proved 
(actually for a more general class of payoff functions) that there is no duality gap, namely that
both equalities $\underline{\Pp}(\mu,\nu) = \underline{\mathcal{D}}(\mu,\nu)$
and 
$\overline{\Pp}(\mu,\nu) = \overline{\mathcal{D}}(\mu,\nu)$ hold.
However, the optimal values may not be attained in the dual problems,
as proved in~\cite[Proposition 4.1]{BHP11}.
In~\cite{HK13} and~\cite{HN08} the authors showed that in the at-the-money case ($\Kf =1$),
with an additional dispersion assumption on the measures~$\mu$ and~$\nu$ 
the optimal values of the dual problems~\eqref{eq:dual} are actually attained.
More precisely, the following result summarises~\cite{HK13} 
(see also Section~\ref{sec:primalnumerics} below for more details about the technical assumption):

\begin{assumption}\label{assump:munuassump}
The support of $\eta:=(\mu-\nu)_+$ is given by an interval $[a,b]\subset\RR_+$,
and the support of $\gamma:=(\nu-\mu)_+$ is given by $\RR_{+}\setminus [a,b]$.
\end{assumption}

\begin{theorem}
The set equalities $\underline{\mathcal{D}}(\mu,\nu)=\underline{\Pp}(\mu,\nu)$
and $\overline{\mathcal{D}}(\mu,\nu)=\overline{\Pp}(\mu,\nu)$ hold,
and the primal optima in~\eqref{eq:primal} are attained:
there exist martingale measures $\mathbb{Q}_{\textrm{L}}$ and $\mathbb{Q}_{\textrm{U}}$
in~$\mathcal{M}(\mu,\nu)$
such that $\underline{\Pp}(\mu,\nu)=\mathbb{E}^{\mathbb{Q}_{\textrm{L}}}|S_{t+\tau}-\Kf S_{t}|$ 
and $\overline{\Pp}(\mu,\nu)=\mathbb{E}^{\mathbb{Q}_{\textrm{U}}}|S_{t+\tau}-\Kf S_{t}|$.
Furthermore, under Assumption~\ref{assump:munuassump}, the infimum and supremum 
in the dual problems~\eqref{eq:dual} are attained when $\Kf = 1$.
\end{theorem}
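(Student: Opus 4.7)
The plan is to decompose the statement into three independent claims and invoke external results for each, rather than redoing the proofs from~\cite{BHP11,HK13,HN08}.

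\textbf{Step 1 (no duality gap).} The equalities $\underline{\mathcal{D}}(\mu,\nu)=\underline{\Pp}(\mu,\nu)$ and $\overline{\mathcal{D}}(\mu,\nu)=\overline{\Pp}(\mu,\nu)$ follow directly by applying Theorem~1 and Corollary~1.1 of~\cite{BHP11} to the payoff $c(x,y):=|y-\Kf x|$. It suffices to check their hypotheses: $c$ is continuous, nonnegative, and has linear growth $c(x,y)\le y+\Kf x$, so $\int c\,\D\zeta \le \int y\,\D\nu + \Kf\int x\,\D\mu < \infty$ for every $\zeta\in\mathcal{M}(\mu,\nu)$ thanks to the common finite mean of $\mu$ and $\nu$.

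\textbf{Step 2 (primal attainment).} This is a weak-compactness argument. The convex-order assumption $\mu\preceq\nu$ ensures that $\mathcal{M}(\mu,\nu)$ is non-empty, and the set is tight because the marginals are fixed. Extracting a minimising (resp. maximising) sequence $(\zeta_n)\subset\mathcal{M}(\mu,\nu)$ for $\int c\,\D\zeta$ and passing to a weakly convergent subsequence $\zeta_n\to\zeta^*$, one checks that the limit preserves both the marginal constraints and the martingale constraint $\int(y-x)f(x)\,\D\zeta = 0$ for $f\in C_b(\RR_+)$ (weakly closed, being linear in a bounded test function). Continuity plus linear growth of $c$, together with the uniform integrability inherited from the fixed marginals, allow passage to the limit inside the integral and yield the optimisers $\mathbb{Q}_{\textrm{L}}$ and $\mathbb{Q}_{\textrm{U}}$.

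\textbf{Step 3 (dual attainment when $\Kf=1$ under Assumption~\ref{assump:munuassump}).} Here I would quote the explicit constructions of~\cite{HK13} for the lower bound (trinomial tree, coupled ODEs) and of~\cite{HN08} for the upper bound (binomial tree). Both papers supply, under the dispersion hypothesis, a concrete transference plan together with a triple $(\psi_0,\psi_1,\delta)$ satisfying the sub- or super-replication inequality with equality on the support of the corresponding extremal martingale measure. Integrating the pointwise identity against that measure, and noting that the $\delta$-term vanishes by the martingale property, gives $\int\psi_0\,\D\mu + \int\psi_1\,\D\nu$ equal to the primal value, and hence, by Step~1, to the dual value. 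Thus the triple realises the dual extremum.

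\textbf{Main obstacle.} The delicate point is Step~3: one must verify that the triples produced in~\cite{HK13,HN08} actually lie in $L^1(\mu)\times L^1(\nu)\times B_b(\RR_+)$. Boundedness of $\delta$ and at-most-linear growth of $\psi_0,\psi_1$ (which, combined with the finite means of $\mu$ and $\nu$, give integrability) follow from the piecewise-linear structure of the payoff and the geometry imposed on the transference plan by Assumption~\ref{assump:munuassump}; confirming this rigorously requires a careful reading of the ODE analysis in~\cite{HK13} and of the binomial-tree construction in~\cite{HN08}.
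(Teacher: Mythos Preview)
Your proposal is correct and matches the paper's approach: the theorem is stated in the paper as a summary of the cited results from~\cite{BHP11},~\cite{HK13} and~\cite{HN08}, with no independent proof given, and you invoke exactly those sources for the corresponding parts (duality and primal attainment from~\cite{BHP11}, dual attainment at $\Kf=1$ from~\cite{HK13} and~\cite{HN08}). If anything, your sketch of the compactness argument in Step~2 and your remark on the integrability check in Step~3 go slightly beyond what the paper itself spells out.
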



\section{No-Arbitrage discretisation of the primal and dual problems}\label{sec:noarbdiscretopttrans}

\subsection{No-arbitrage discretisation of the density}\label{sec:DiscrCons}
Let $t>0$ be some given time horizon, $S_t$ the random variable describing the stock price
at time~$t$, and $\mu$ the law of~$S_t$.
Fix $m>1$ and suppose that we are given a set~$\xx = (x_1,\ldots,x_m)\in\RR^{m}_+$ of points $0<x_1<x_2<...<x_m$ in the support of~$\mu$, and a 
discrete distribution~$\qq$ 
with atom $q_i$ at the point~$x_i$ sampled from $\mu$ 
(for example if $\mu$ admits density $f$ then one can take $q_i=f(x_i)/\sum_{i=1}^mf(x_i)$).
We wish to find a discrete distribution~$\pp$, close to~$\qq$, 
matching the first $l\leq m$ moments of~$\mu$, 
in particular satisfying the `martingale' condition $\langle \pp, \xx\rangle = 1$.
Let $\TT:\RR_{+}\to\RR_{+}^{l}$ be given
by $\TT(x):=(x,x^2,...,x^{l})$
and define the moment vector $\Tf := \int_{\RR_{+}}\TT(x)\mu(\D x) \in \RR_+^{l}$.
Such a matching condition is not necessarily consistent with a given set of (European) option prices.
In order to ensure that the discrete density re-prices these options, 
we add a second layer:
Borwein, Choksi and Mar\'echal~\cite{Borwein} suggested to recover discrete probability distributions from observed market prices of European Call options by
minimising the Kullback-Leibler divergence to the uniform distribution 
(they also comment that any prior distribution can be chosen). 
In particular given the law~$\mu$ of~$S_t$ and a set of European Call option prices
$\Pf \in \RR^M_+$,
maturing at~$t$ with strikes $K_1<\cdots<K_M$, we can solve the minimisation problem:
\begin{equation}\label{eq:moomp}
\min_{\pp\in [0, 1]^{m}: \|\pp\| = 1}\sum_{i=1}^{m}p_i\log\left(\frac{p_i}{q_i}\right),
\qquad
\text{subject to }
\left(\left(\langle\CC_j(\xx),\pp\rangle\right)_{j=1,\ldots,M}, 
\left(\langle\TT_j(\xx),\pp\rangle\right)_{j=1,\ldots,l}\right) = (\Pf,\Tf).
\end{equation}
for some prior discrete distribution~$\qq$,
where $\CC_j(\xx):=\left((x_1-K_j)_+,\ldots,(x_m-K_j)_+\right)$, for $j=1,\ldots,M$, denotes the payoff vector of the options
and $\TT_j(\xx) := (x_1^j,\ldots,x_m^j)$ is the $j$-th moment vector ($j=1,\ldots,l$).
\begin{definition}
The discrete distribution~$\pp$ is consistent with~$\Pf$
whenever the following hold:
\begin{enumerate}[(i)]
\item $\sum_{i=1}^{m}p_{i}=1$;
\item $\langle \CC_j(\xx),\pp\rangle = \sum_{i=1}^{m}p_{i}\CC_j(x_i) = \Pf_j$ for all $j=1,\ldots,M$;
\item $\langle \TT_1(\xx),\pp\rangle = \langle\xx, \pp\rangle = \Tf_1 = 1$;
\end{enumerate}
\end{definition}
Note that the last item in the definition above is nothing else than the martingale condition.
It must be noted that if the full marginal distribution~$\mu$ of~$S_t$ is known, 
any finite subset of European options can be chosen above 
and the price vector can be defined as $\Pf := \int_{\RR_+}\CC(x)\mu(\D x)$,
where $\CC(x) := ((x-K_1)_+, \ldots, (x-K_M)_+)\in \RR_+^M$ for any $x\in\RR_+$.
In particular the solution to this problem can be obtained as a modification of the solution in~\cite{TT13},
which itself is based on arguments by Borwein and Lewis~\cite[Corollary 2.6]{BorweinLewis}:
\begin{equation}\label{eq:moomsol}
p_i = \displaystyle \frac{q_i\exp\Big(\left\langle \lambda^*, \left(\CC(x_i),\TT(x_i)\right)\right\rangle\Big)}{\sum_{i=1}^{m}q_i \exp\Big(\left\langle \lambda^*, \left(\CC(x_i), \TT(x_i)\right)\right\rangle\Big)},
\quad
\text{where}
\quad
\lambda^*:=\underset{\lambda\in\RR^{M+l}}{\mathrm{argmin}}
\left\{-\left\langle \lambda, \left(\Pf,\Tf\right)\right\rangle
 + \log\left(\sum_{i=1}^{m}q_i \E^{\left\langle \lambda, \left(\CC(x_i), \TT(x_i)\right)\right\rangle}\right)\right\}.
\end{equation}
We can now consider the following algorithm:

\begin{algorithm}\label{algo:generatediscrete}\ 
\begin{enumerate}[(i)]
\item 
Several choices are possible for the~$m$ points $0<x_1<...<x_m$; for instance:
\begin{enumerate}[(a)]
\item \textit{Binomial}: 
Let~$\Sigma$ denote the at-the-money lognormal volatility (for European options maturing at~$t$).
Set $\delta := t/(m-1)$, $u:=1 + \left(\E^{\delta\Sigma^2}-1\right)^{1/2}$,
$d:=1 - \left(\E^{\delta\Sigma^2}-1\right)^{1/2}$,
and $x_{i}:=u^{i-1} d^{m-i}$;
\item \textit{Gauss-Hermite}: 
$x_i:=\E^{\tilde{x}_i}$, where 
$\tilde{x}_1,...,\tilde{x}_{m}$ are the nodes of an $N$-point Gauss-Hermite quadrature.
\end{enumerate}
\item For the discrete distribution~$q$, we can follow several routes:
\begin{enumerate}[(a)]
\item if~$\mu$ admits a density~$f_\mu$, then, for $i=1,\ldots,m$, 
set $q_i := f_\mu(x_i)/\sum_{j=1}^{m}f_\mu(x_j)$;
\item alternatively, for $i=1,...,m$, let $q_i:=\mu([x_{i-1},x_{i}))$ (with $x_0=0$);
\end{enumerate}
\item Compute the discretised measure~$p$ through~\eqref{eq:moomsol}.
\end{enumerate}
\end{algorithm}

\begin{remark}
As pointed out by Tanaka and Toda~\cite{TT13} the choice of discretisation points $x_1,\ldots,x_m$ in Algorithm~\ref{algo:generatediscrete} is dictated by the discretisation of the integrals
$\int_{\RR_{+}}(\CC(x),\TT(x))\D f_{\mu}(x) \approx \sum_{i=1}^m w(x_i)(\CC(x_i),\TT(x_i))f_{\mu}(x_i)$. 
The weights~$w(\cdot)$ are chosen in accordance with a given quadrature rule; 
in the case of Algorithm~\ref{algo:generatediscrete} the weights are chosen constant 
$w(x_i) = \left(\sum_{i=1}^m f_{\mu}(x_i)\right)^{-1}$ 
for all $i = 1,\ldots,m$.
\end{remark}

In the particular case where the discretisation nodes and the given strikes satisfy some specific ordering, 
it is possible to choose a more explicit weighting scheme~$\pp$ consistent with~$\Pf$
with minimal assumptions on the choice of discretisation:

\begin{lemma}\label{lemma:M2discretisation}
Let $m = M+2$.
Consistency with $\Pf$ is ensured if both sets of conditions hold:
\begin{enumerate}[(i)]
\item $x_1 < K_1$, $x_{M+2} \geq \left(\Pf_{M-1}K_M - \Pf_{M}K_{M-1}\right) / \left(\Pf_{M-1}-\Pf_M\right)$ and
$x_{i+1} = K_i$ for $i=1,\ldots,M$;
\item $p_{M+2} = \Pf_M/\left(x_{M+2} - K_M\right)$, $p_{1} = 1 - \displaystyle\sum_{j=2}^{M+2}p_j$, 
and, for any $i = M,\ldots,1$,
$$
p_{i+1} = \frac{1}{x_{i+1} - K_{i-1}}\left(\Pf_{i-1} - \sum_{j=i+2}^{M+2} p_{j}(x_{j} - K_{i-1})\right),
$$
\end{enumerate}
with the convention $K_0:=x_1$ and $\Pf_0 := 1-x_1$.
\end{lemma}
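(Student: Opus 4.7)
My plan is to exploit the ordering $x_{i+1}=K_i$ (with $x_1<K_1$) to turn the consistency equations into a triangular linear system solvable explicitly by back-substitution. The structural consequence of the choice of nodes is that, since $x_1<K_1=x_2<K_2=x_3<\cdots<K_M=x_{M+1}<x_{M+2}$, one has $(x_j-K_i)_+=(x_j-K_i)\mathbf{1}_{\{j\geq i+2\}}$, so each call-pricing constraint $\langle \CC_i(\xx),\pp\rangle=\Pf_i$ collapses to $\Pf_i=\sum_{j=i+2}^{M+2}p_j(x_j-K_i)$ for $i=1,\ldots,M$. The $i$-th equation then involves only the tail weights $p_{i+2},\ldots,p_{M+2}$.

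I would then solve this system by back-substitution. The equation at $i=M$ is univariate in $p_{M+2}$ and produces $p_{M+2}=\Pf_M/(x_{M+2}-K_M)$. For each $i=M-1,\ldots,1$, the corresponding equation introduces exactly one new unknown $p_{i+2}$, solved in closed form as a function of the already-determined weights $p_{i+3},\ldots,p_{M+2}$. After shifting the recursion index by one, this is precisely the formula for $p_{i+1}$ (at $i=M,\ldots,2$) stated in the lemma, and it determines $p_3,\ldots,p_{M+2}$.

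The remaining two weights $p_1,p_2$ are fixed by the normalisation $\sum p_j=1$ and the martingale condition $\langle \xx,\pp\rangle=1$. The identity $p_1=1-\sum_{j\geq 2}p_j$ is exactly the normalisation. The lemma's formula for $p_2$ arises from applying the same recursion at the formal index $i=1$ with the conventions $K_0:=x_1$ and $\Pf_0:=1-x_1$; a one-line rearrangement shows that this formula reads $\sum_{j=2}^{M+2}p_j(x_j-x_1)=1-x_1$, which combined with the normalisation gives $\sum_{j=1}^{M+2}p_j x_j=1$, i.e.\ the martingale condition. All three consistency conditions are then satisfied.

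The only subtlety I anticipate is non-negativity of $\pp$. Non-negativity of $p_{M+2}$ is immediate since $x_{M+2}>K_M$, and a direct computation shows that non-negativity of $p_{M+1}$ is equivalent to the lemma's stated lower bound on $x_{M+2}$. Non-negativity of the remaining weights is not ensured by the hypotheses alone, but follows from the standard no-arbitrage convexity of the observed call prices $\Pf_j$ in the strikes $K_j$; this is the only place where genuine input from the market data (as opposed to pure algebra) is required.
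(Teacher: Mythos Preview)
Your proposal is correct and follows essentially the same route as the paper: recognise that the node placement $x_{i+1}=K_i$ makes the system of consistency equations upper triangular, solve it by back-substitution (the paper writes this explicitly as $\AAA\pp^\xx=\overline{\Pf}$ with $\AAA$ upper triangular), and then verify non-negativity of the weights using the no-arbitrage convexity of $\Pf$ together with the lower bound on $x_{M+2}$ for $p_{M+1}$. Your observation that the $i=1$ case of the recursion, under the conventions $K_0=x_1$ and $\Pf_0=1-x_1$, encodes the martingale condition is exactly how the paper's matrix absorbs the mean constraint into the second row.
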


\begin{proof}
Since the strikes are ordered and the vector
$\overline{\Pf}:=(1, 1, \Pf_1, \ldots, \Pf_M)' \in \RR^{M+2}$ satisfies no-arbitrage conditions
(in the sense of~\cite[Theorem 3.1]{DH07}), the assumption
$x_{M+2} \geq \left(\Pf_{M-1}K_M - \Pf_{M}K_{M-1}\right) / \left(\Pf_{M-1}-\Pf_M\right)$
implies that $x_{M+2} > K_M$.
The consistency of the weighting scheme with~$\Pf$ can be written as $\AAA\pp^\xx = \overline{\Pf}$, with
\begin{equation*}
\AAA := \begin{pmatrix}
1 & 1 & \ldots & \ldots & 1\\
x_1 & x_2 & \ldots & \ldots & x_{M+2}\\
0 & (x_2 - K_1) & \ddots & \ddots & (x_{M+2} - K_1)\\
\vdots & \ddots & \ddots & \ddots & \vdots\\
0 & \ddots & 0 & (x_{M} - K_{M-1}) & (x_{M+2} - K_{M-1})\\
0 & \ldots & 0 & 0 & (x_{M+2} - K_M)
\end{pmatrix}\qquad\text{and}\qquad
\pp^\xx := 
\begin{pmatrix}
p_1 \\
p_2 \\
\vdots \\
p_{M} \\
p_{M+2}
\end{pmatrix}.
\end{equation*}
Here~$\AAA$ is a real upper triangular matrix, 
so that the system has a unique solution given in the lemma. 
It remains to check that the feasible solutions of the system satisfy the additional constraints
$p_i \geq 0$ for $i=1,\ldots,M+2$.

Since $x_{M+2} \geq \frac{\Pf_{M-1}K_M - \Pf_{M}K_{M-1}}{\Pf_{M-1}-\Pf_M}$ it follows that 
$$
p_{M+2} = \frac{\Pf_M}{x_{M+2} - K_M}\leq \frac{\Pf_{M-1}-\Pf_M}{K_M-K_{M-1}}
\leq \frac{-K_{M-1} + K_M}{K_M-K_{M-1}},
$$
where the last inequality follows from the Put-Call Parity and absence of arbitrage in~$\overline{\Pf}$.
By definition we have 
$$
p_{M+1} = \frac{1}{K_M - K_{M-1}}\left(\Pf_{M-1} - p_{M+2}(x_{M+2} - K_{M-1})\right) =
\frac{1}{K_M - K_{M-1}}\left(\Pf_{M-1} - \Pf_{M} - \Pf_{M}\frac{K_M - K_{M-1}}{x_{M+2} - K_{M}}\right),
$$
and by assumption on $x_{M+2}$ we have that $p_{M+1}\geq 0$. 
Similarly for $p_{M}$ we have 
\begin{align*}
p_{M} & = \frac{1}{K_{M-1} - K_{M-2}}\left[\Pf_{M-2} - p_{M+1}(K_{M} - K_{M-2}) - p_M(x_{M+2} - K_{M-2})\right]\\
 & = \frac{\Pf_{M-2}-\Pf_{M-1}}{K_{M-1} - K_{M-2}} - \frac{\Pf_{M-1}-p_{M+2}(x_{M+2} - K_M)}{K_{M} - K_{M-1}}\\
 & = \frac{\Pf_{M-2}-\Pf_{M-1}}{K_{M-1} - K_{M-2}} - \frac{\Pf_{M-1}-\Pf_{M}}{K_{M} - K_{M-1}}.
\end{align*}
Proceeding recursively we observe that for $i=2,\ldots,M-1$ we have 
$$
p_{i} = \frac{\Pf_{i-2}-\Pf_{i-1}}{K_{i-1} - K_{i-2}} - \frac{\Pf_{i-1}-\Pf_{i}}{K_{i} - K_{i-1}}. 
$$
Since the prices $\overline{\Pf}$ satisfy no-arbitrage conditions, 
then $p_i \geq 0$ for $i=1,\ldots,M+2$.

Note that, when $x_{M+2} = \frac{\Pf_{M-1}K_M - \Pf_{M}K_{M-1}}{\Pf_{M-1}-\Pf_M}$, 
$$
p_{M+2} = \frac{\Pf_{M-1}-\Pf_M}{K_M-K_{M-1}},
\qquad
p_{M+1} = 0, 
\qquad 
p_{i} = \frac{\Pf_{i-2}-\Pf_{i-1}}{K_{i-1} - K_{i-2}}-\frac{\Pf_{i-1}-\Pf_{i}}{K_{i} - K_{i-1}},
$$
for $i=2,\ldots,M$ and the discretisation reduces to $M+1$ points, where $x_{M+1} = K_M$ is discarded.
\end{proof}

\begin{remark}
One could in principle generalise Lemma~\ref{lemma:M2discretisation} to a construction where $x_{i+1}\in(K_{i-1},K_i]$ for $i=1,\ldots,M$ (and obviously $x_1<K_1$ and $x_{M+2}>K_M$). 
Quick computations however reveal that writing a clear set of sufficient conditions for consistency 
with~$\overline{\Pf}$ is notationally cumbersome and practically not particularly enlightening.
\end{remark}

\subsection{Balayage-consistent discretisation}

Assume now that there are~$M$ European Call options~$\Pf^\xx$ available with maturity~$t$ 
and~$N$ options~$\Pf^\yy$ available with maturity~$t+\tau$, 
and that the measures~$\mu$ and~$\nu$ are calibrated to those European options:
$\int_{\RR_+}(x-K^\xx_i)_+\mu(\D x) = \Pf^\xx_i$ for all $i=1,\ldots,M$ 
and $\int_{\RR_+}(y-K^\yy_j)_+\nu(\D y) = \Pf^\yy_j$ for all $j=1,\ldots,N$. 
We further assume that~$\Pf^\xx$ and~$\Pf^\yy$ do not contain arbitrage 
(in the sense of~\cite[Theorem 4.2]{DH07}), 
otherwise there is no equivalent martingale measure and the primal problem~\eqref{eq:primal} is infeasible.
For two discretisation meshes $\xx = (x_1,\ldots,x_m)\in\RR^m_+$ and $\yy = (y_1,\ldots,y_n)\in\RR^n_+$,
Algorithm~\ref{algo:generatediscrete}, for instance, produces discrete distributions~$\pp^\xx$ 
and~$\pp^\yy$ supported on~$\xx$ and~$\yy$.
It is not however guaranteed that the convex ordering of the original measures~$\mu$ and~$\nu$ 
is preserved for~$\pp^\xx$ and~$\pp^\yy$.
The following lemma provides sufficient conditions ensuring this.

\begin{lemma}\label{lemma:weightsconst}
Let~$\pp^\xx$ and~$\pp^\yy$ be two discrete measures supported on~$\xx$ and~$\yy$.
The following conditions altogether ensure that $\pp^\xx\preceq\pp^\yy$:
\begin{enumerate}[(i)]
\item $\pp^\xx$ and~$\pp^\yy$ have the same mean equal, $y_1\leq x_1$ and $y_n\geq x_m$;
\item Assumption~\ref{assump:munuassump} holds with~$\mu = \pp^\xx$ and~$\nu = \pp^\yy$.
\end{enumerate}
\end{lemma}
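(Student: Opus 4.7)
My plan is to verify the convex order $\pp^\xx\preceq\pp^\yy$ through its classical characterisation via call prices. Condition~(i) already supplies the equal-mean requirement, so it suffices to show that
\begin{equation*}
g(K) := \int_{\RR_+}(y-K)_+\pp^\yy(\D y) - \int_{\RR_+}(x-K)_+\pp^\xx(\D x)
\end{equation*}
is non-negative for every $K\in\RR$. Writing $F_\xx$ and $F_\yy$ for the CDFs of $\pp^\xx$ and $\pp^\yy$, the elementary identity $\int_{\RR_+}(s-K)_+\rho(\D s) = \int_K^\infty(1-F_\rho(u))\D u$ recasts the call-price difference as $g(K) = \int_K^\infty H(u)\D u$ with $H := F_\xx - F_\yy$.

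The central step is to establish a single-crossing property for~$H$. The boundary conditions $y_1\le x_1$ and $y_n\ge x_m$ in~(i) ensure that both CDFs vanish on $[0,y_1)$ and equal~$1$ on $[y_n,\infty)$, so $H\equiv 0$ outside $[y_1,y_n]$; coupled with the equal-mean assumption, this yields the key identity $\int_{\RR}H(u)\D u = 0$. The function~$H$ is piecewise constant and jumps by $\pp^\xx(\{z\})-\pp^\yy(\{z\})$ at each atom~$z$. Reading Assumption~\ref{assump:munuassump} in its natural discrete form---positive atoms of the signed measure $\pp^\xx-\pp^\yy$ lie in~$[a,b]$ and negative atoms lie outside~$[a,b]$---the function~$H$ is therefore non-increasing on $[0,a)\cup(b,\infty)$ and non-decreasing on~$[a,b]$. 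Combined with $H\equiv 0$ outside $[y_1,y_n]$, this forces $H\le 0$ on $[y_1,a)$ and $H\ge 0$ on~$(b,y_n]$; the monotonicity on~$[a,b]$ from a non-positive to a non-negative value then produces a crossing point $c\in[a,b]$ with $H\le 0$ on $(-\infty,c]$ and $H\ge 0$ on~$[c,\infty)$.

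The conclusion splits according to the position of~$K$ relative to~$c$: for $K\ge c$, the integrand is non-negative on~$[K,\infty)$ and $g(K)\ge 0$ directly; for $K<c$, the mean-zero identity yields $g(K) = -\int_{-\infty}^K H(u)\D u$, and the integrand is non-positive on~$(-\infty,K]$, so again $g(K)\ge 0$.

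I expect the main delicacy to be interpreting Assumption~\ref{assump:munuassump}---stated there for measures absolutely continuous with respect to Lebesgue measure---in a form meaningful for the purely atomic $\pp^\xx$ and~$\pp^\yy$. The natural translation is that the sign of every jump of~$H$ is governed by whether the jump location lies in~$[a,b]$, which is exactly the monotonicity input driving the single-crossing structure. Once that is settled, the rest is bookkeeping: the boundary conditions of~(i) pin~$H$ to zero at the ends of its support, and the mean-zero identity swaps the two regimes for~$K$ above and below~$c$.
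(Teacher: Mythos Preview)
Your argument is correct and mirrors the paper's proof almost exactly. Both introduce the call-price difference $g(K)=G(K)$, identify its (right) derivative as the CDF difference (your $H=F_\xx-F_\yy$ is the paper's $-\delta F$), use Assumption~\ref{assump:munuassump} to obtain the monotonicity pattern non-increasing/non-decreasing/non-increasing for~$H$ on $[0,a)$, $[a,b]$, $(b,y_n]$, and combine this with the boundary values $H(0)=H(y_n)=0$ to deduce a single sign change; the only cosmetic difference is that the paper concludes via ``$G$ increases then decreases with $G(0)=G(y_n)=0$'', whereas you use the equivalent device $g(K)=\int_K^\infty H=-\int_{-\infty}^K H$ and split on $K\gtrless c$.
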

\begin{remark}
In our framework, the discrete measures~$\pp^\xx$ and~$\pp^\yy$ 
arise as discretisations of the original measures~$\mu$ and~$\nu$.
If~$\pp^\xx$ and~$\pp^\yy$ are to be consistent with~$\Pf^\xx$ and~$\Pf^\yy$,
then the discretisation nodes~$\xx$ and~$\yy$ must be finer than the sets of input strikes,
which we can write as
\begin{itemize}
\item $x_1<K^\xx_1$, $y_1<K^\yy_1$, $x_m > K^\xx_M$ and $y_n > K^\yy_N$;
\item for all $i\in\{1,\ldots,M-1\}$, there exists $k_i \in \{2,\ldots,m-1\}$ such that $K^\xx_i\leq x_{k_i}\leq K^\xx_{i+1}$,
and for all $j\in\{1,\ldots,N-1\}$, there exists $k_j\in \{2,\ldots,n-1\}$ such that $K^\yy_j\leq y_{k_j}\leq K^\yy_{j+1}$.
\end{itemize}
\end{remark}

\begin{proof}[Proof of Lemma~\ref{lemma:weightsconst}]
To fix the notations,
for any set~$A\subset\RR$, define $\pp^\xx(A) := \sum_{\{i:x_i\in A\}}p_i^\xx$ and 
$\pp^\yy(A) := \sum_{\{j: y_j\in A\}}p_j^\yy$.
In particular for any $z\in[0,y_n]$ we have 
$$
\pp^\yy([0,z]) = \sum_{\{j\leq n: y_j\leq z\}}p^\yy_j,
\qquad\text{and}\qquad
\pp^\xx([0,z]) = \sum_{\{i\leq m: x_i\leq z\}}p^\xx_i.
$$
Define furthermore the function $\delta F: \RR_+\to\RR$ by $\delta F(z) := \pp^\yy([0,z]) - \pp^\xx([0,z])$.
Assumption~\ref{assump:munuassump} with~$\mu = \pp^\xx$ and~$\nu = \pp^\yy$,
together with the boundary conditions in Lemma~\ref{lemma:weightsconst}(i),
can therefore be written as 
\begin{equation}\label{eq:etagamma}
\left\{
\begin{array}{ll}
\displaystyle \eta(A) = (\pp^\xx(A) - \pp^\yy(A))_+>0, & \text{for any }A\subseteq [a,b],\\
\displaystyle \gamma(A) = (\pp^\yy(A) - \pp^\xx(A))_+ >0, & \text{for any }A\subseteq [0,y_n]\setminus[a,b].
\end{array}
\right.
\end{equation}

Define now the functions $f_{\xx},g_{\yy}:\RR_+\to\RR_+$ by
$f_{\xx}(K) := \langle\pp^{\xx}, (\xx - K)_+\rangle$ 
and $g_{\yy}(K) := \langle\pp^{\yy}, (\yy - K)_+\rangle$.
By~\cite[Chapter~2, Definition~2.1.6]{B12}, 
the balayage condition $\pp^\xx\preceq\pp^\yy$ will be satisfied as soon as 
$f_{\xx}$ and $g_{\yy}$ are convex and $f_{\xx}(K)\leq g_{\yy}(K)$ for all $K\in\RR_+$. 
Since~$\pp^\xx$ and~$\pp^\yy$ have non-negative components,
the functions~$f_\xx$ and~$g_\yy$ are clearly convex, 
so we are left to prove that $f_{\xx}(\cdot)\leq g_{\yy}(\cdot)$ on~$\RR_+$.

We first show that the conditions 
$y_1 \leq x_1$ and $x_m \leq y_n$ are necessary.
If $x_1<y_1$, then $f_{\xx}(K) = g_{\yy}(K) = 1 - K$ for all $K\in [0,x_1]$.
Also, for any $K\in (x_1, x_2\wedge y_1)$, 
$g_{\yy}(K) = 1 - K$ and $f_{\xx}(K) = \sum_{i=2}^mp^\xx_i(x_i - K) = (1-K) + p^\xx_1(K-x_1)>(1-K)$, 
which violates the balayage order. 
Similarly, if $x_m > y_n$, let $i^* :=\inf\{i : x_i\geq y_n\}$;
then for any $K\in (y_n,x_m)$,
$g_{\yy}(K) = 0$ and $f_{\xx}(K) = \sum_{i=i^*}^mp^{\xx}_i(x_i - K)$
which yields the conclusion. 

Introduce now the function $G : [0,y_n] \rightarrow \RR$ as
\begin{align}\label{eq:FunctionG}
G(K) :=g_\yy(K) - f_\xx(K)
 & = \sum_{\{j: y_j>K\}}p^\yy_j(y_j - K) - \sum_{\{i: x_i>K\}}p^\xx_{i}(x_i - K)\nonumber\\
 & = \left(\sum_{\{i: x_i>K\}}p^\xx_{i} - \sum_{\{j: y_j>K\}}p^\yy_j\right)K
   + \sum_{\{j: y_j>K\}}p^\yy_jy_j - \sum_{\{i: x_i>K\}}p^{\xx}_{i} x_i.  
\end{align}
The function~$G$ is piecewise linear on $[0, y_n]$, 
not differentiable at the points $\{x_i\}_{1\leq i\leq m}$ and $\{y_j\}_{1\leq j\leq n}$,
attains its maximum and minimum on $[0,y_n]$ and $G(0) = G(y_n) = 0$. 
The lemma then follows if $G(K)\geq 0$ for all $K\in [0,y_n]$.
Let~$\partial^{+}G$ denote its right derivative on~$[0,y_n]$ 
(with the convention $\partial^{+}G(y_n)=0$).
The positivity of~$G$ together with the boundary conditions at~$0$ and~$y_n$ 
are therefore equivalent to the following:
$$
\partial^{+}G(K)\geq 0, \quad \text{for all }K\leq K^*
\qquad\text{and}\qquad
\partial^{+}G(K)\leq 0, \quad \text{for all }K> K^*,
$$
where $K^*:=\argmax_KG(K)$.
From~\eqref{eq:FunctionG}, we can therefore write, for any $K \in [0,y_n]$,
$$
\partial^{+}G(K) = \sum_{\{i: x_i>K\}}p^\xx_i - \sum_{\{j: y_j>K\}}p^\yy_j = 
\sum_{\{j: y_j \leq K\}}p^\yy_j - \sum_{\{i: x_i \leq K\}}p^\xx_i = \delta F(K),
$$
with~$\delta F$ defined above.
Since~$\delta F$ is piecewise constant, 
it remains to show that it admits a maximum and a minimum attained on unique subsets of $[0,y_n]$.
Note first that $\delta F(0) = 0 = \delta F(y_n)$, 
and that the image of~$[0,y_n]$ by~$\delta F$ is exactly~$[-1,1]$.
For any $z\in [0,a)\cup (b,y_n]$, $\delta F(z) = \gamma([0,z])$, and therefore
$\delta F$ is an increasing function on $[0,a)\cup (b,y_n]$. 
On~$[a,b]$, however, one has 
$$
\delta F(z) = \pp^\yy([0,z]) - \pp^\xx([0,z]) = 
\pp^\yy([a,z]) - \pp^\xx([a,z]) + \pp^\yy([0,a)) - \pp^\xx([0,a)) =
-\eta([a,z]) + \gamma([0,a)).
$$
Since $\gamma([0,a))>0$ and $\eta$ is a measure on $[a,b]$, 
then $\delta F$ is decreasing on $[a,b]$.   
Therefore there exists a unique quadruplet~$(c, d, e, f)$ such that 
$[c,d]\subset [0,y_n]$, $[e,f]\subset [0,y_n]$ and
$[c,d] = \argmax_z\delta F(z)$ and $[e,f] = \argmin_z\delta F(z)$.
It therefore follows that $\delta F(\cdot)\geq 0$ on~$[0,z^*)$
and $\delta F(\cdot)\leq 0$ on~$[z^*,y_n]$, 
for some $z^*\in(d,e)$.
\end{proof}

\begin{remark}\label{remark:dispersionAssumptionComment}
Note that the dispersion Assumption~\ref{assump:munuassump} intuitively implies that the variance of the underlying price process increases with maturity,
which is the case for all stochastic volatility models.
Lemma~\ref{lemma:weightsconst} provides general conditions to ensure that convex order is preserved
under discretisation of continuous measures. 
These are however difficult to verify analytically for general distributions, 
or even for those obtained by Algorithm~\ref{algo:generatediscrete}.
In Figure~\ref{fig:convordercheck} below, we provide numerical evidence that the convex order is preserved for these distributions.
\end{remark}
\begin{figure}[htb] 
\centering
\includegraphics[scale=0.5]{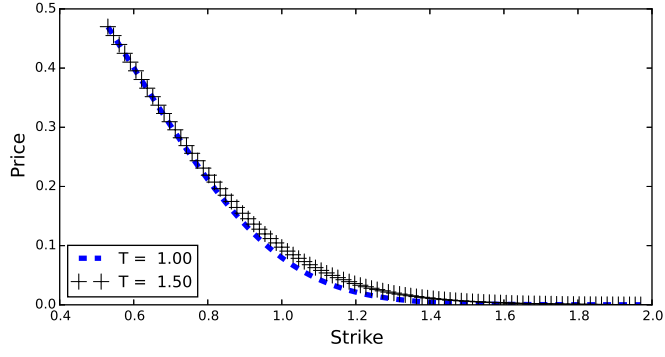}
\caption{
Call prices with maturities $t=1$ (dashed) and $t+\tau = 1.5$ (crosses) computed 
using the discretised densities of the Black-Scholes model, with parameter $\sigma = 0.2$.
As a consistency check, the Call prices with maturity $t+\tau$ are strictly greater than those with maturity~$t$,
and both functions are convex.
}
\label{fig:convordercheck}
\end{figure}

\subsection{Primal and dual formulation}\label{sec:DualDiscretisation}
We focus here on the discretisation of the primal and dual formulations for the upper bound, 
and note that an analogous formulation holds for the lower bound.
We use Algorithm~\ref{algo:generatediscrete} to approximate 
the random variables~$S_t$ and~$S_{t+\tau}$ by discrete random variables~$\St_t$ and~$\St_{t+\tau}$
with distributions~$\pp^\xx$ and~$\pp^\yy$ supported on~$\xx \in \RR^{m}_+$ and $\yy \in\RR^{n}_+$.
The linear programme for the primal problem~\eqref{eq:primal} then reads
\begin{align}\label{eq:DiscretisedPrimal}
 & \overline{\Pp}(\mu,\nu) := \max_{\zeta \in \mathscr{M}^{m,n}_+}\sum_{i,j}\zeta_{i,j}|y_j-\Kf x_i|,\\
 &
\text{subject to }
(\|\zeta_{i,\cdot}\|)_{i=1,\ldots,m} = \pp^\xx,\quad
(\|\zeta_{\cdot,j}\|)_{j=1,\ldots,n} = \pp^\yy,\quad
(\langle\zeta_{i,\cdot},(x_{i} - \yy)\rangle)_{i=1,\ldots,m} = \mathbf{0}\in\RR^m,
\nonumber
\end{align}
where $\mathscr{M}^{m,n}_+$ denotes the set of matrices of size $m\times n$ with non-negative entries.
For the dual problem, denote the Call option price (with strike~$K$ and maturity~$t$) 
on $\widetilde{S}_t$ by 
$\widetilde{C}(t,K) := \mathbb{E}(\widetilde{S}_t-K)_+ = \sum_{i=i^*}^{m}(x_i-K) p_i^\xx$, 
where $i^*:=\inf\{1\leq i\leq m: x_i>K\}$ and $\widetilde{C}(t,K)=0$ if $x_m\leq K$.
The next result is essential for the discretisation of the dual problem,
and follows by simple, yet careful, manipulations of telescopic sums.
\begin{lemma}
Let $l\in\NN$ and $\Km = (K_1,\ldots,K_{l})$.
For any one-dimensional real random variable~$Z$, the following representation holds almost surely:
\begin{equation}\label{eq:replicationdiscrete}
\varphi(Z) =
\varphi(K_1)+\DD\varphi(K_1)\left(Z - K_1\right)_+
+\sum_{i=2}^{l-1}\left(\DD\varphi(K_i)-\DD\varphi(K_{i-1})\right)
\left(Z - K_i\right)_+,
\end{equation}
for any continuous function~$\varphi$, where the forward finite-difference operator $\DD$ is defined as
$$
\DD\varphi(K_i):=\frac{\varphi(K_{i+1}) - \varphi(K_{i})}{K_{i+1} - K_i},
\qquad\text{for }i=1,\ldots,l-1.
$$ 
\end{lemma}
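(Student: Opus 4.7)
The plan is to treat the right-hand side of~\eqref{eq:replicationdiscrete} as a continuous piecewise-linear function of the real argument~$Z$, with breakpoints exactly at $K_1,\ldots,K_{l-1}$, and to show that it agrees with~$\varphi$ at each node $K_1,\ldots,K_l$. Since in the applications the discretised prices $\widetilde{S}_t$ and $\widetilde{S}_{t+\tau}$ take values in the chosen grid, this yields the claimed almost-sure identity; more precisely, the statement should be read as: for any random variable~$Z$ supported on $\{K_1,\ldots,K_l\}$, the two sides coincide almost surely. (The RHS is by construction the piecewise-linear interpolant of~$\varphi$ on that grid, so deterministic pointwise agreement outside the grid cannot be expected for a general continuous~$\varphi$.)

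First I would compute the slope of the right-hand side on each linear piece. It vanishes on $(-\infty,K_1]$, and on $[K_j,K_{j+1}]$ for $1\leq j\leq l-1$ a telescoping argument gives
\begin{equation*}
\DD\varphi(K_1) + \sum_{i=2}^{j}\bigl(\DD\varphi(K_i)-\DD\varphi(K_{i-1})\bigr) = \DD\varphi(K_j).
\end{equation*}
Second, I would verify by induction on~$j$ that the right-hand side, evaluated at $Z=K_j$, equals $\varphi(K_j)$. The base case $j=1$ is immediate, since every $(K_1-K_i)_+$ with $i\geq 1$ vanishes. For the induction step, writing $\mathrm{RHS}(\cdot)$ for the right-hand side viewed as a function of~$Z$, the slope computation yields
\begin{equation*}
\mathrm{RHS}(K_{j+1}) = \mathrm{RHS}(K_j) + \DD\varphi(K_j)(K_{j+1}-K_j)
= \varphi(K_j) + \bigl(\varphi(K_{j+1})-\varphi(K_j)\bigr)
= \varphi(K_{j+1}),
\end{equation*}
using the definition of~$\DD\varphi(K_j)$.

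Finally, since the identity holds deterministically at every $K_j\in\{K_1,\ldots,K_l\}$, it holds almost surely whenever the law of~$Z$ charges only this finite set of points, which is precisely the setting of Section~\ref{sec:DualDiscretisation}. There is no genuine obstacle in this proof: the only non-cosmetic step is recognising the telescoping structure that reduces the accumulated slope on $[K_j,K_{j+1}]$ to the single finite difference~$\DD\varphi(K_j)$, after which everything is bookkeeping. The real interest of the lemma is not its proof but the resulting fact that any payoff at a grid-valued node can be exactly replicated by a static portfolio of a cash position, a forward, and calls struck at $K_2,\ldots,K_{l-1}$, which is what will allow the dual problem in Section~\ref{sec:DualDiscretisation} to be recast as a finite-dimensional linear programme.
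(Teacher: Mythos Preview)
Your argument is correct and is precisely the telescoping computation the paper alludes to (the paper does not spell out a proof beyond ``simple, yet careful, manipulations of telescopic sums''). Your additional remark that the identity, as literally stated for arbitrary continuous~$\varphi$ and arbitrary real~$Z$, can only hold when~$Z$ is supported on the grid $\{K_1,\ldots,K_l\}$ is a genuine clarification of the lemma's intended scope, consistent with its use in Section~\ref{sec:DualDiscretisation}.
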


Let now $Z = \widetilde{S}_t$, $\varphi \equiv \psi_0$, and consider the vector of strikes $\Km^{\xx} \in \RR_{+}^{M}$.
Equality~\eqref{eq:replicationdiscrete} can then be rewritten as 
$
\psi_0(\widetilde{S}_t)=
\displaystyle w_0^{\xx}+w_1^{\xx} (\widetilde{S}_t - K_1^{\xx})_+
 + \sum_{i=2}^{M-1}w_{i}^{\xx}(\widetilde{S}_t - K_i^{\xx})_+
$,
with $l=M$ and where the weights~$w^{\xx}$ read
$$
w^{\xx}_0 := \psi_0(K_1^{\xx}), \qquad 
w^{\xx}_1:= \DD\psi_0(K_1^{\xx}),\qquad
w^{\xx}_i:= \DD\psi_0(K_i^{\xx})-\DD\psi_0(K_{i-1}^{\xx}),\text{ for } i=2,\ldots,M-1,
$$
so that
$$
\mathbb{E}\psi_0(\widetilde{S}_t)
 = w_0^{\xx} + w_1^{\xx} \EE\left(\widetilde{S}_t-K_1^{\xx}\right)_+  + \sum_{i=2}^{M-1}w_{i}^{\xx}\widetilde{C}(t,K_i^{\xx}).
$$
Likewise, for $Z = \widetilde{S}_{t+\tau}$, $\varphi \equiv \psi_1$ and $\Km^{\yy} \in \RR_{+}^{N}$,
an analogous formulation holds at time $t+\tau$:
$$
\mathbb{E}\psi_1(\widetilde{S}_{t+\tau})
 = w_0^{\yy} + w_1^{\yy} \EE\left(\widetilde{S}_{t+\tau}-K_1^{\yy}\right)_+  + \sum_{i=2}^{N-1}w_{i}^{\yy}\widetilde{C}(t+\tau,K_i^{\yy}),
$$
with
with the identification (from~\eqref{eq:replicationdiscrete}) $l=N$ and where the weights~$w^{\yy}$ read
$$
w^{\yy}_0 := \psi_0(K_1^{\yy}), \qquad 
w^{\yy}_1:= \DD\psi_0(K_1^{\yy}),\qquad
w^{\yy}_i:= \DD\psi_0(K_i^{\yy})-\DD\psi_0(K_{i-1}^{\yy}),\text{ for } i=2,\ldots,N-1.
$$

Define the set 
$\displaystyle 
\Xx := \{(x,y): x\in \mathrm{Supp}(\widetilde{S}_{t}), y\in\mathrm{Supp}(\widetilde{S}_{t+\tau})\}
$,
and assume from now on that 
$K_1^{\xx}$ and $K_1^{\yy}$ are such that
$\widetilde{S}_t \geq K_1^{\xx}$ and $\widetilde{S}_{t+\tau} \geq K_1^{\yy}$ almost surely
(equivalently, $K_1^{\xx}\leq x_1$ and $K_1^{\yy}\leq y_1$), 
so that the martingale property (ensured via Algorithm~\ref{algo:generatediscrete}) yields
$\EE(\widetilde{S}_t-K_1^{\xx})_+ = 1-K_1^{\xx}$
and 
$\EE(\widetilde{S}_{t+\tau}-K_1^{\yy})_+ = 1-K_1^{\yy}$.
The dual problem~\eqref{eq:dual} then reads
\begin{equation}\label{eq:SemiInfDual}
\overline{\mathcal{D}}(\mu,\nu)
 = \min
\left\{ v +w_1^{\xx}  + w_1^{\yy}
+ \sum_{i=2}^{M-1}w_i^{\xx} \widetilde{C}(t,K_i^{\xx})
+ \sum_{i=2}^{N-1}w_i^{\yy}\widetilde{C}(t+\tau,K_i^{\yy}): 
(\ww^{\xx},\ww^{\yy},\delta) \in \RR^{M+N}\times C_b(\RR_+)\right\},
\end{equation}
with 
$\ww^{\xx} = (w_0^{\xx},\ldots, w_{M-1}^{\xx})$,
$\ww^{\yy} = (w_0^{\yy},\ldots, w_{N-1}^{\yy})$,
subject to the constraints
\begin{equation*}
\left\{
\begin{array}{rcl}
\displaystyle v + w_1^{\xx} x +w_1^{\yy}y
+\sum_{i=2}^{M-1}w_i^{\xx} (x-K_i^{\xx})_+
+\sum_{i=2}^{N-1}w_i^{\yy} (y-K_i^{\yy})_+
+\delta(x)(y-x)
 & \geq  & |y-\Kf x|, 
\quad\text{for all }(x,y)\in\Xx,\\
w_0^{\xx} + w_0^{\yy} - w_1^{\xx} K_1^{\xx} - w_1^{\yy} K_1^{\yy}  & =  & v.
\end{array}
\right.
\end{equation*}
The dual problem here is semi-infinite dimensional since 
the minimisation is performed over the finite-dimensional vectors~$\ww^{\xx}$ and~$\ww^{\yy}$,
but also over the space of continuous and bounded functions~$\delta$ on~$\RR_+$
(it is enough to consider only continuous and bounded functions instead of bounded measurable functions as pointed out in~\cite[Section 1.5]{BHP11}).
The importance of incorporating the martingale conditions into the discretisation is critical. 
This is easily seen in the following example for the primal problem, which also yields an issue for the dual.
Suppose that $S_t$ can take value $0.75$ or $1.25$ each with $50\%$  probability and $S_{t+\tau}$ can take value $0.5$ or $1.5$ each with $50\%$ probability.
Note that $\EE(S_t)=\EE(S_{t+\tau})=1$.
We consider the primal problem.
The constraints $\|\zeta_{i,\cdot}\| = \mu_i$ and $(\langle\zeta_{i,\cdot}, (x_i-\yy)\rangle)_{i} = \mathbf{0}$ 
fully determine the probabilities
$\zeta_{1,1}=\zeta_{2,2}=3/8$ and $\zeta_{1,2}=\zeta_{2,1}=1/8$.
The final constraints $\|\zeta_{\cdot,j}\| = \nu_j$ are only true if $\nu_1=\nu_2=1/2$ or 
$\EE(S_{t+\tau})=1$. Otherwise, there will be no solution to this LP.
This stresses the importance of a consistent no-arbitrage discretisation of the problem.

\subsubsection{Approximation of the dual}
In order to reduce the dual problem~\eqref{eq:SemiInfDual} to a purely finite-dimensional problem, 
we further add a layer of discretisation for the continuous and bounded delta hedges~$\delta$.
Similar to~\cite{PHL11},
fix $M_b\in\NN$ and a finite-dimensional basis $(\phi_i)_{i=1,\ldots,M_b}$ on $C_b(\RR_+)$, 
and let $\ww^b:=(w_{1}^{b}, \ldots, w_{M_b}^{b})$ be a vector in~$\RR^{M_b}$;
define then the discretised hedge~$\widetilde{\delta}:\RR_+\to\RR$ as 
\begin{equation}\label{eq:DiscretisedDelta}
\widetilde{\delta}(x) := \sum_{i=1}^{M_b}w_i^{b} \phi_i(x),
\end{equation}
so that the new (discretised) dual problem now has the following finite-dimensional form:
\begin{equation}\label{eq:FiniteDual}
\overline{\mathcal{D}}_b(\mu,\nu)
 = \min
\left\{ v +w_1^{\xx}  + w_1^{\yy}
+ \sum_{i=2}^{M-1}w_i^{\xx} \widetilde{C}(t,K_i^{\xx})
+ \sum_{i=2}^{N-1}w_i^{\yy}\widetilde{C}(t+\tau,K_i^{\yy}): 
(\ww^{\xx},\ww^{\yy},\ww^b) \in \RR^{M+N+M_b}\right\},
\end{equation}
subject to the constraints
\begin{equation*}
\left\{
\begin{array}{rcl}
\displaystyle v + w_1^{\xx} x +w_1^{\yy}y
 + \sum_{i=2}^{M-1}w_i^{\xx} (x-K_i^{\xx})_+
 + \sum_{i=2}^{N-1}w_i^{\yy} (y-K_i^{\yy})_+
 + \widetilde{\delta}(x)(y-x)
 & \geq  & |y-\Kf x|, 
\quad\text{for all }(x,y)\in\Xx,\\
w_0^{\xx} + w_0^{\yy} - w_1^{\xx} K_1^{\xx} - w_1^{\yy} K_1^{\yy}  & =  & v.
\end{array}
\right.
\end{equation*}

\section{Primal solution for the at-the-money case}\label{sec:primalnumerics}

In~\cite{HK13}, Hobson and Klimmek derived the lower bound optimal martingale transport plan for the at-the-money ($\Kf=1$) forward-start straddle.
Let $\Delta(z):= \int_0^z f_\nu(u)\D u - \int_0^z f_\mu(u)\D u$ for all $z\geq0$;
then Assumption~\ref{assump:munuassump}---crucial in Hobson and Klimmek's analysis---is 
equivalent to~$\Delta$ having a single maximiser~\cite[Lemma 5.1]{CLM14}.
This assumption imposes constraints on the tail behaviour of the difference 
between the two laws~$\mu$ and~$\nu$, and is clearly satisfied, for instance, in the Black-Scholes case.

\subsection{Structure of the transport plan}
The key risk for an at-the-money forward-start straddle is that a long position is equivalent to being short the kurtosis of the conditional distribution of the underlying asset (see for example in the introduction of~\cite{HN08}). 
Therefore to produce the lowest possible price it seems reasonable 
to require a transport plan that maximises the kurtosis of the conditional distribution. 
This is indeed the structure of the solution in~\cite{HK13}.
We leave as much common mass $(\mu\wedge\nu)$ in place and then map the residual mass $\eta$ on $[a,b]$ to the tails of the distribution $\gamma$ via two decreasing functions 
$p:[a,b]\to [0,a]$ and $q:[a,b]\to [b,\infty)$.
Using the martingale condition,
Hobson and Klimmek~\cite{HK13} derive
a system of coupled differential equations for~$(p,q)$:
\begin{equation}\label{eq:ODEpq}
p'(x)= \frac{q(x)-x}{q(x)-p(x)}\frac{f_{\mu}(x)-f_{\nu}(x)}{f_{\mu}(p(x))-f_{\nu}(p(x))},
\qquad
q'(x)= \frac{x-p(x)}{q(x)-p(x)}\frac{f_{\mu}(x)-f_{\nu}(x)}{f_{\mu}(q(x))-f_{\nu}(q(x))},
\end{equation}
with boundary conditions 
\begin{align*}
p(b) & = \inf\left\{x\geq 0: \gamma([0,x]) > 0\right\},\\
q(b) & = \inf\left\{x\geq 0: \gamma([0,x]) > \gamma([0,b])\right\},\\
p(a) & = \sup\left\{x\geq 0: \eta([0,x]) <\eta([0,a])\right\},\\
q(a) & = \sup\left\{x\geq 0: \gamma([0,x]) <1\right\}.
\end{align*}
By taking limits, we obtain that $p(a)=a$, $p(b)=0$
$q(a) = +\infty$ and $q(b) = b$, see also~\cite[Proposition 5.6]{CLM14}.

\subsection{Implementation}
The right-hand side of the equations in~\eqref{eq:ODEpq} are undefined at the boundary points.
An application of L'H\^opital's rule shows that $\lim_{x\uparrow b}q'(x)=-1$ if
$f_{\mu}'(b)\neq f_{\nu}'(b)$, which is a reasonable assumption in practice, as will be illustrated in Section~\ref{sec:OTnumerics}.
On the other hand $\lim_{x\uparrow b}p'(x)$ depends on the marginal measures~$\mu$ and~$\nu$.
For instance, in the lognormal example in Section~\ref{sec:OTnumerics}, we find that $p'(x)=\mathcal{O}\left(\E^{\alpha(\log p(x))^2}\right)$ for some $\alpha>0$
as $x$ tends to~$b$ from below, 
and $\lim_{x\uparrow b}p'(x)=-\infty$ (see for example Figure~\ref{fig:BSPDF}(b)).
On the other hand if for example $f_{\mu}(0)\neq f_{\nu}(0)$ or  $f_{\mu}'(0)\neq f_{\nu}'(0)$ then $\lim_{x\uparrow b}p'(x)=0$. 

In order to circumvent these issues so that we can apply the Runge-Kutta method to solve these ODEs,
we introduce the following pre-processing step: 
fix a small $\eps>0$ (in our implementations, we choose $\eps=0.001$),
integrate both sides of~\eqref{eq:ODEpq} over $[b-\eps,b]$ 
and then approximate the right-hand side by using the rectangle rule for the integral with the unknown values
$p^*:=p(b-\eps)$ and  $q^*:=q(b-\eps)$.
This yields the following simultaneous equations for $p^*$ and $q^*$ which we solve numerically:
\begin{align*}
p^*=-\eps \frac{q^*-b+\eps}{q^*-p^*}\frac{f_{\mu}(b-\eps)-f_{\nu}(b-\eps)}{f_{\mu}(p^*)-f_{\nu}(p^*)}, \\ 
q^*=b - \eps \frac{b-\eps-p^*}{q^*-p^*}\frac{f_{\mu}(b-\eps)-f_{\nu}(b-\eps)}{f_{\mu}(q^*)-f_{\nu}(q^*)}.
\end{align*}
These equations can easily be reduced into one root search; the first equation gives 
$$
q^*=\frac{(p^*)^2\left[f_{\mu}(p^*)-f_{\nu}(p^*)\right]+\eps (b-\eps)\left[f_{\mu}(b-\eps)-f_{\nu}(b-\eps)\right] }
{p^*\left[f_{\mu}(p^*)-f_{\nu}(p^*)\right]+\eps \left[f_{\mu}(b-\eps)-f_{\nu}(b-\eps)\right]},
$$
which in turn can be plugged into the second equation to solve for $p^*$. The pair $(p,q)$ in~\eqref{eq:ODEpq} is then solved for $x\in[a,b-\eps]$ using standard Runge-Kutta methods with the new boundary conditions $p(b-\eps)=p^*, q(b-\eps)=q^*$.
The lower bound for the at-the-money forward-start straddle price is then given using the optimal
conditional density $\rho^*(Y=y\vert X=x)$ given in~\cite[page 199]{HK13}:
\begin{equation*}
\rho^*(Y=y\vert X=x) = 
\left\{
\begin{array}{ll}
\displaystyle \frac{f_{\eta}(x)}{f_{\mu}(x)} \frac{q(x)-x}{q(x)-p(x)}\ind_{\{y=p(x)\}}, & \text{if } y<x,\\
\displaystyle \left(1-\frac{f_{\eta}(x)}{f_{\mu}(x)}\right), & \text{if } y=x,\\
\displaystyle \frac{f_{\eta}(x) }{f_{\mu}(x)}\frac{x-p(x)}{q(x)-p(x)}\ind_{\{y=q(x)\}}, & \text{if } y>x,
\end{array}
\right.
\end{equation*}
and hence straightforward computations yield
\begin{align}\label{eq:HKSolution}
\mathbb{E}(|Y-X|)
 & = \int_{\RR}\mathbb{E}(|Y-X|\vert X=x)f_{\mu}(x) \D x
 = \int_{\RR}\int_{\RR}\rho^*(Y=y \vert X=x)f_{\mu}(x)|y-x|\D y \D x\nonumber \\
 &  = \int_{\RR}\int_{-\infty}^{x}f_{\eta}(x) \frac{q(x)-x}{q(x)-p(x)}\ind_{\{y=p(x)\}}|y-x|  \D y\D x
 + \int_{\RR}\int_{x}^{+\infty}f_{\eta}(x) \frac{x-p(x)}{q(x)-p(x)}\ind_{\{y=q(x)\}}|y-x|  \D y\D x\nonumber\\
 &  = \int_{a}^{b}\frac{2(x-p(x))(q(x)-x)}{q(x)-p(x)} f_{\eta}(x) \D x.
\end{align}

\section{Numerical analysis of the no-arbitrage bounds}\label{sec:OTnumerics}
We now illustrate the numerical methods developed in Sections~\ref{sec:noarbdiscretopttrans} and~\ref{sec:primalnumerics} on the Black-Scholes and the Heston models.
These examples involve forward-start option prices (and forward implied volatility smile), 
which we now quickly recall.
In the Black-Scholes model, the dynamics of the stock price process under the risk-neutral measure are given by
$\D S_t=S_t \Sigma \D W_t$, $S_0=1$,
where $\Sigma>0$ represents the instantaneous volatility and $W$ is a standard Brownian motion.
The no-arbitrage price of the Call option at time zero then reads
$\BS(\tau,K,\Sigma) := \mathbb{E}\left(S_\tau-K\right)^+
=\Nn\left(d_+\right)-\E^k\Nn\left(d_-\right)$,
with $d_\pm:=-\frac{k}{\Sigma\sqrt{\tau}}\pm\frac{1}{2}\Sigma\sqrt{\tau}$,
where $\Nn$ is the standard normal distribution function. 
Since the increments of the stock price process are stationary and independent, 
the forward-start option with  payoff $(S_{t+\tau}-KS_t)^+$ with $t,\tau>0$
is worth $\BS(\tau,K,\Sigma)$. 
For a given market or model price $C^{\textrm{obs}}(t,\tau,K)$ of the option at strike~$K$, 
forward-start date~$t$ and maturity~$\tau$,
the forward implied volatility smile $\sigma_{t,\tau}(K)$ is then defined
 as the unique solution to 
$C^{\textrm{obs}}(t,\tau,K) = \BS(\tau,k,\sigma_{t,\tau}(K))$. 

In the following two subsections, we shall consider the observed vanilla Call option prices
as computed from the Black-Scholes and the Heston model.
We shall discretise the corresponding dual problems following Section~\ref{sec:DualDiscretisation},
by considering the compact interval~$[0,5]$
for the supports of the discretised random variables~$\St_t$ and~$\St_{t+\tau}$ with $m=n=500$ discretisation points;
the vectors of observed strikes are taken as 
$\Km^{\xx} = \Km^{\yy} = \left\{0.3, 0.4, 0.5,\ldots,2\right\}$.

\subsection{Application to the Black-Scholes model}\label{subsection:BlackScholesApp}
Let $\mathcal{N}(\overline{m},\Sigma^2)$ denote the Gaussian distribution with mean~$\overline{m}$ 
and variance~$\Sigma^2$,
and assume that the random variables~$S_t$ and~$S_{t+\tau}$ are distributed according to
$$
\log(S_t)\sim\mathcal{N}\left(-\frac{1}{2}\Sigma^2 t,\Sigma^2 t\right)
\qquad\text{and}\qquad
\log(S_{t+\tau})\sim\mathcal{N}\left(-\frac{1}{2}\Sigma^2 (t+\tau),\Sigma^2 (t+\tau)\right).
$$ 
Clearly a candidate martingale coupling  is the Black-Scholes model with volatility~$\Sigma$
and starting at $S_0 = 1$;
in this case the forward volatility, i.e. the implied volatility computed from the forward-start option,
 is constant and also equal to~$\Sigma$.
In Figure~\ref{fig:BSPDF}(a),
we consider the values $\Sigma=0.2$, $t=1$ and $\tau=0.5$, 
and plot the distributions of $S_t$ and $S_{t+\tau}$.

In Figure~\ref{fig:BSPDF}(b) we plot the lower and upper bounds for the forward implied volatility smile
computed from the discretisation of the dual problems~\eqref{eq:dual}
via~\eqref{eq:FiniteDual}.
The lower bound at-the-money case using the Hobson-Klimmek solution and the LP dual solution are virtually identical ($6.95\%$ vs $6.98\%$), illustrating the consistency of the two approaches.
Note that even in this simple case the range of possible forward smiles consistent with the two marginal laws is wide.
The magnitude of the bounds produced is similar to bounds obtained in \cite{PHL11}
for cliquet options which are very closely related to forward-start straddles.
This behaviour can be explained intuitively as no conditional instruments are used to hedge the straddle, 
hence there is no restriction on the conditional probabilities between times $t$ and $t+\tau$. 
The only restriction is that the two marginal laws at those times are placed in convex order, 
producing a very large class of feasible martingale measures.

\begin{figure}[h!tb] 
\centering
\mbox{\subfigure[Densities.]{\includegraphics[scale=0.8]{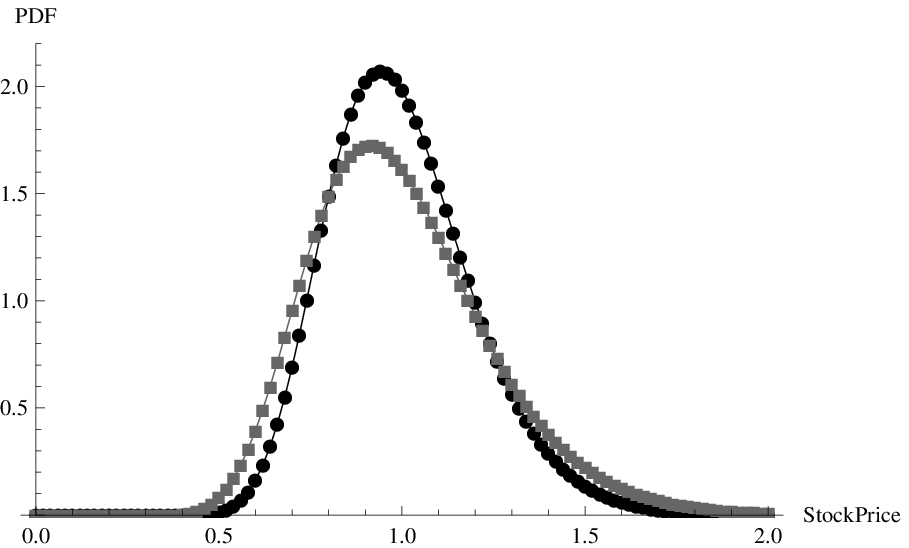}}\quad
\subfigure[Robust bounds via the dual problem.]{\includegraphics[scale=0.8]{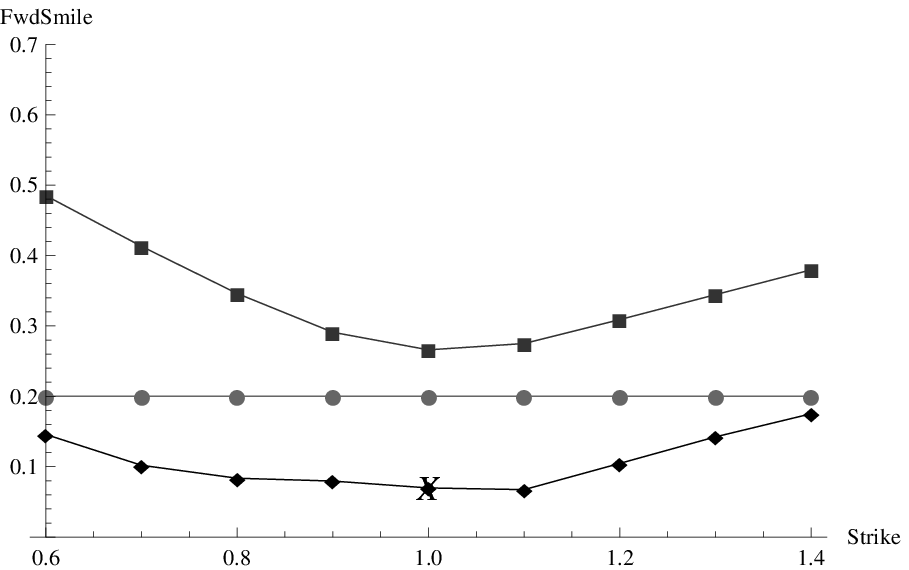}}}
\caption{(a) the circles represent the one-year lognormal density and the squares the $1.5$-year lognormal density.
(b) the circles represent the constant Black-Scholes forward volatility~$\Sigma$ consistent with the marginals.
The squares and the diamonds are the lower and upper bounds found by solving the dual problems~\eqref{eq:dual} 
via~\eqref{eq:FiniteDual};
the X cross is the Hobson-Klimmek solution~\eqref{eq:HKSolution} for the lower bound at-the-money case.
}
\label{fig:BSPDF}
\end{figure}

\subsection{Application to the Heston model}\label{subsection:HestonApp}
The marginal distributions for expiries $t=1$ and $t+\tau=1.5$ are now generated according to 
the Heston stochastic volatility model~\cite{Heston}, 
in which the stock price process is the unique strong solution to the stochastic differential equation
\begin{equation}
\begin{array}{rcll}
\D S_t & = & S_t \sqrt{V_t} \D W_t, & S_0 = 1,\\
\D V_t & = & \kappa\left(\theta-V_t\right)\D t + \xi\sqrt{V_t} \D Z_t, & V_0 = v>0,
\end{array}
\end{equation}
where $W$ and $Z$ are two one-dimensional standard Brownian motions with
$\D\langle W, Z\rangle_t = \rho \D t$, and $\kappa, \theta, \xi>0$ and $\rho \in [-1,1]$.
We consider here the following values: $v=\theta=0.07$, $\kappa=1$, $\xi=0.4$ and $\rho=-0.8$.
The (spot) implied volatility smiles and corresponding densities are displayed in Figure~\ref{fig:HestonPDF}.
Figure~\ref{fig:HestonBound} shows the Heston forward smile consistent with the marginals 
(computed using the inverse Fourier transform representation and a simple root search method) and the lower and upper bounds for the forward smile, 
from the discretised version~\eqref{eq:FiniteDual} of the dual problems~\eqref{eq:dual}.
For the discretisation of the delta hedge~\eqref{eq:DiscretisedDelta}, 
we consider the monomials $(\phi_1, \phi_2, \phi_3)(x) \equiv (1, x, x^2)$.
As in the Black-Scholes case, the Hobson-Klimmek solution ($7.77\%$) and the dual solution ($7.80\%$) 
for the lower-bound at-the-money volatility are virtually identical.
Figure~\ref{fig:CallDeltaHedge}(a) shows the payoff of the option prices in the super-hedge:
one enters into positions with long convexity for the $1.5$-year maturity and short convexity 
for the $1$-year maturity.


\begin{figure}[h!tb] 
\centering
\mbox{
\subfigure[Densities.]{\includegraphics[scale=0.8]{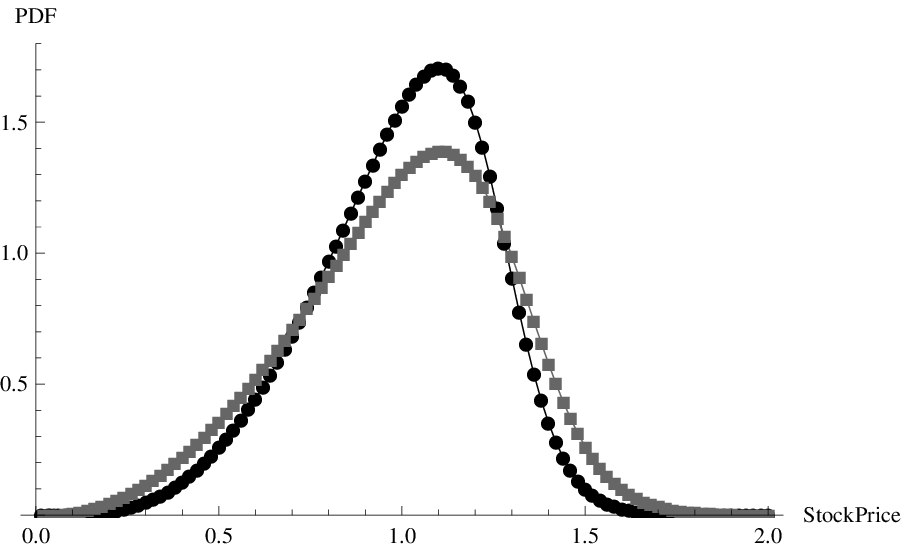}}\quad
\subfigure[Spot implied volatility.]{\includegraphics[scale=0.8]{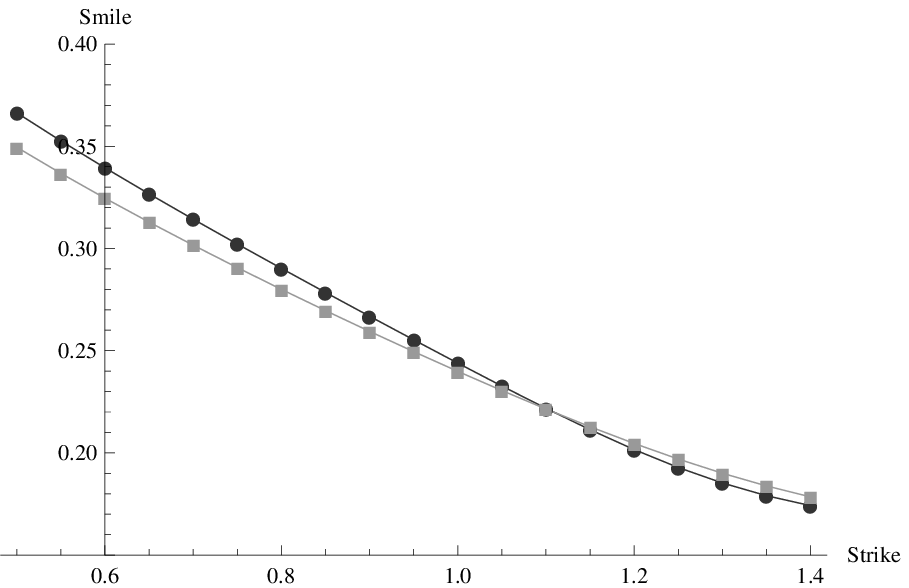}}
}
\caption{(a) Circles (squares) represents the $1$ year ($1.5$ year) marginal densities.
(b) Circles (squares) represent the corresponding spot implied volatilities.}
\label{fig:HestonPDF}
\end{figure}

\begin{figure}[h!tb] 
\centering
\includegraphics[scale=0.8]{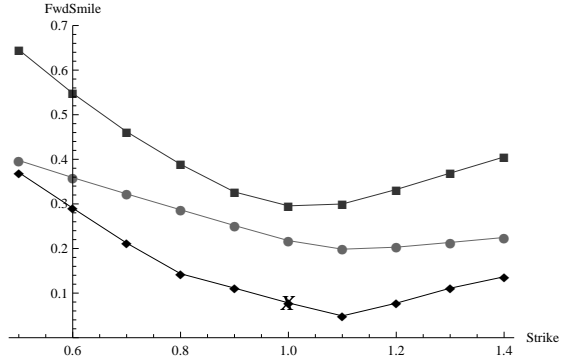}
\caption{The circles represent the Heston forward volatility consistent with the marginals, 
and the squares and diamonds stand far the lower and upper bounds found by solving the LP problem in Section~\ref{sec:noarbdiscretopttrans},
and X is the primal Hobson-Klimmek solution for the lower bound at-the-money case (Section~\ref{sec:primalnumerics}).}
\label{fig:HestonBound}
\end{figure}

\begin{figure}[h!tb] 
\centering
\mbox{\subfigure[Super-hedging portfolio and forward-start payoffs]{\includegraphics[scale=0.8]{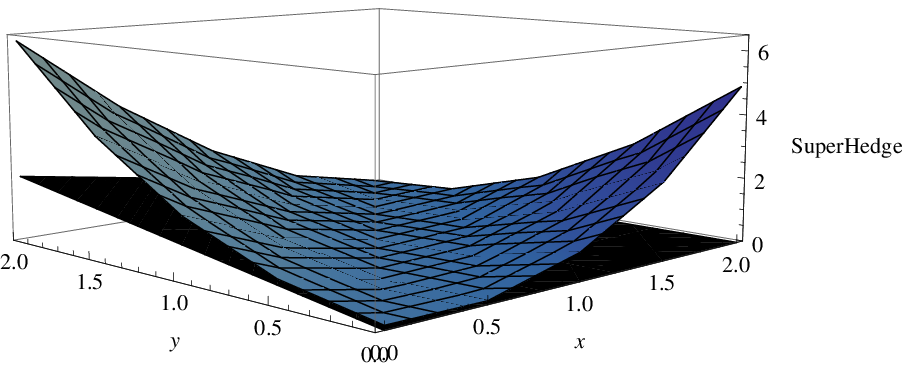}}\quad
\subfigure[Discretised Delta hedge]{\includegraphics[scale=0.8]{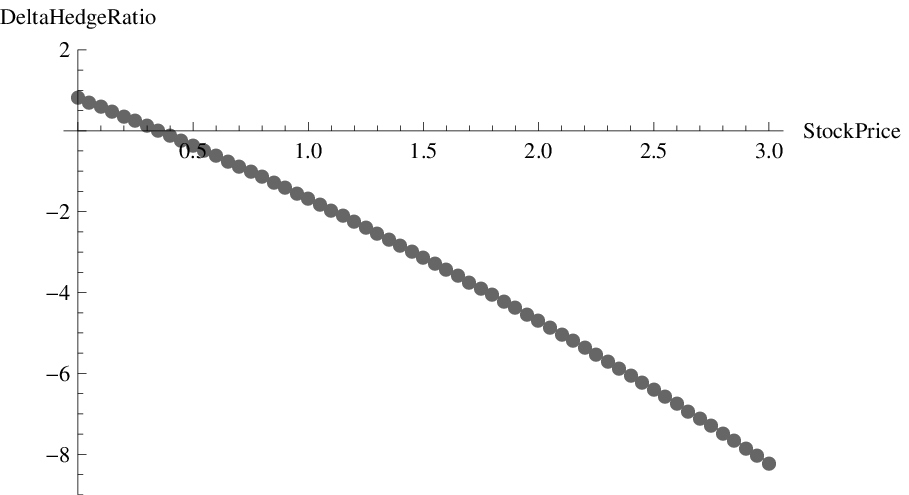}}}
\caption{The strike is taken at the money: ~$\Kf=1$.
(a) Super-hedging portfolio (top plot) and forward-start payoffs (below);
(b) Optimal discretised delta hedge $\widetilde{\delta}$ defined in~\eqref{eq:DiscretisedDelta}.}
\label{fig:CallDeltaHedge}
\end{figure}

In both examples the range of forward smiles consistent with the marginal laws is large.
Using European options to `lock-in' (replicate) forward volatility or hedge forward volatility dependent claims
seems illusory.
Forward-start options should be seen as fundamental building blocks for exotic pricing and not decomposable (or approximately decomposable) into European options.
Models used for forward volatility dependent exotics should have the capability of calibration to forward-start option prices and at a minimum should produce realistic forward smiles that are consistent with trader expectations and observable prices.

\subsection{A note on the discretisation methodology}
In both examples above, the discretised random variables~$\St_{t}$ and~$\St_{t+\tau}$ 
were supported on~$500$ points. 
This choice was arbitrary, and it is natural to question it.
In the semi-infinite case, the absence of duality gap between the primal problem and its dual 
guarantees~\cite[Theorem 3.1]{Shapiro09} the existence of a discretisation, the value of which converges to that of the primal problem as the number of points increases.
Rates of convergence have also been obtained for discretisation schemes 
in semi-infinite programming~\cite{Shapiro09, Still}.
Our setting here (Primal problem~\eqref{eq:primal} and its dual~\eqref{eq:dual}) is however
of infinite-dimensional nature, and, to the best of our knowledge, no corresponding result exists (yet!).
A general, theoretical, proof is outside the scope of our approach, and we now provide some numerical evidence
about the convergence and stability of our discretisation scheme~\eqref{eq:FiniteDual}. 
We choose two different discretisation grids 
for the supports of the random variables~$\St_{t}$ and~$\St_{t+\tau}$, 
following Algorithm~\ref{algo:generatediscrete}: 
a uniform grid and a grid consisting of roots of Legendre polynomials, both with the same number of points.
Below, Tables~\ref{table:subhedgegauss} and~\ref{table:subhedgeuniform} represent the optimal values of the sub-hedging dual problem as the number of discretisation points increases, 
for different values of the forward-start straddle strike~$\Kf$. 
Clearly, refining the discretisation grid produces only minor changes in the optimal values for the at-the-money case ($\Kf=1$) and virtually none for the other cases. 
Tables~\ref{table:superhedgegauss} and~\ref{table:superhedgeuniform} represent the optimal values of the super-hedging dual problem for different numbers of discretisation points and different values of the forward-start straddle strike~$\Kf$.
Refining the discretisation grid yields only minor changes to the optimal values. 
In contrast to the sub-hedging case, the change in optimal values for the super-hedging problem 
can be observed for all strikes.
Finally, both sub- and super-hedging dual problems seem to be very stable with respect to the discretistation schemes and the number of points. 

We would also like to comment on the rate of convergence of the optimal portfolios with respect to the partition refinements. 
Hobson and Neurberger~\cite[Section 6.2]{HN08} obtained
explicit expressions for the dual variables~$(\psi_0, \psi_1,\delta)$, defined in~\eqref{eq:dual},
in the lognormal case for the at-the-money forward-start straddle ($\Kf = 1$): 
\begin{equation}\label{eq:HN_lognormal_sln}
\begin{array}{ll}
&\psi_0(x)  = -\xi x\ln x + \xi x\ln\left(A\sinh (\xi^{-1})\right) + x\coth(\xi^{-1}),\\
&\psi_1(y)  = \xi\left(y\ln y - y\ln(A/\xi) - y\right), \\
&\delta(x)  = -\xi \ln\left(x / A\sinh(\xi^{-1})\right),
\end{array}
\end{equation}
where the constants $A$ and $\xi$ are such that the expected value of the hedging portfolio 
$\psi_0(x) + \psi_1(y) + \delta(x)(y-x)$ is minimised under super-hedging constraints.
They also showed that the bound achieved by the solution is tight~\cite[Section 10]{HN08}.
We compute the sup-norm error $\eps(n)$ ($n$ is the number of discretisation points) between solutions of the discretised Dual~\eqref{eq:FiniteDual} and the Hobson-Neuberger solution~\eqref{eq:HN_lognormal_sln}.
We visually check the rate of convergence (i.e. the highest exponent $r$ such that $\eps(n) \sim \Oo(d_n^r)$)
with respect to the discretisation mesh size~$d_n$ by plotting
$\log(\eps(n))/\log(d_n)$ against $\log(d_n)$. 
The resulting plot is presented in Figure~\ref{fig:ConvergenceLog} below. 
As mentioned above, no theoretical rates of convergence exist for infinite-dimensional linear programming problems;
in the semi-infinite case, the corresponding plot would be roughly constant.

\begin{table}[h!tb]
\centering
	\begin{tabular}{| c | c | c | c | c | c |}
	\hline
	\diagbox{Strike}{Number of points} & 75 & 250 & 500 & 1000 & 2000 \\ \hline 
	0.6 & 0.4 & 0.4 & 0.4 & 0.4 & 0.4 \\ \hline
	0.7 & 0.3 & 0.3 & 0.3 & 0.3 & 0.3 \\ \hline
	0.8 & 0.2 & 0.2 & 0.2 & 0.2 & 0.2 \\ \hline
	0.9 & 0.1001 & 0.1 & 0.1 & 0.1 & 0.1 \\ \hline
	1.0 & 0.0390 & 0.0384 & 0.0384 & 0.0384 & 0.0383 \\ \hline
	1.1 & 0.1004 & 0.1004 & 0.1004 & 0.1004 & 0.1004 \\ \hline
	1.2 & 0.2 & 0.2 & 0.2 & 0.2 & 0.2 \\ \hline
	1.3 & 0.3 & 0.3 & 0.3 & 0.3 & 0.3 \\ \hline
	1.4 & 0.4 & 0.4 & 0.4 & 0.4 & 0.4 \\ \hline
	\end{tabular}
\caption{Optimal values of the sub-hedging dual problem (using roots of Legendre polynomials).}
\label{table:subhedgegauss}	
\end{table}

\begin{table}[h!tb]
\centering
	\begin{tabular}{| c | c | c | c | c | c |}
	\hline
	\diagbox{Strike}{Number of points} & 75 & 250 & 500 & 1000 & 2000 \\ \hline 
	0.6 & 0.4 & 0.4 & 0.4 & 0.4 & 0.4 \\ \hline
	0.7 & 0.3 & 0.3 & 0.3 & 0.3 & 0.3 \\ \hline
	0.8 & 0.2 & 0.2 & 0.2 & 0.2 & 0.2 \\ \hline
	0.9 & 0.10008 & 0.1 & 0.1 & 0.1 & 0.1 \\ \hline
	1.0 & 0.0388 & 0.0384 & 0.0384 & 0.0384 & 0.0383 \\ \hline
	1.1 & 0.1006 & 0.1004 & 0.1004 & 0.1004 & 0.1004 \\ \hline
	1.2 & 0.2 & 0.2 & 0.2 & 0.2 & 0.2 \\ \hline
	1.3 & 0.3 & 0.3 & 0.3 & 0.3 & 0.3 \\ \hline
	1.4 & 0.4 & 0.4 & 0.4 & 0.4 & 0.4 \\ \hline
	\end{tabular}
\caption{Optimal values of the sub-hedging dual problem (using a uniform grid).}
\label{table:subhedgeuniform}	
\end{table}

\begin{table}[h!tb]
\centering
	\begin{tabular}{| c | c | c | c | c | c |}
	\hline
	\diagbox{Strike}{Number of points} & 75 & 250 & 500 & 1000 & 2000 \\ \hline 
	0.6 & 0.4147  & 0.4156 & 0.4157 & 0.4157 & 0.4157 \\ \hline
	0.7 & 0.3241 & 0.3255 & 0.3257 & 0.3257 & 0.3257 \\ \hline
	0.8 & 0.2394 & 0.2411 & 0.2413 &	0.2413 & 0.2414 \\ \hline
	0.9 & 0.1707 & 0.1743 & 0.1745 & 0.1746 & 0.1746 \\ \hline
	1.0 & 0.1453 & 0.1485 & 0.1489 & 0.1490 & 0.1490 \\ \hline
	1.1 & 0.1786 & 0.1815 & 0.1816 & 0.1817 & 0.1817 \\ \hline
	1.2 & 0.2513 & 0.2536 & 0.2538 & 0.2539 & 0.2539 \\ \hline
	1.3 & 0.3373 & 0.3394 & 0.3396 & 0.3397 & 0.3397 \\ \hline
	1.4 & 0.4292 & 0.4314 & 0.4316 & 0.4316 & 0.4316 \\ \hline
	\end{tabular}
\caption{Optimal values of the super-hedging dual problem (using roots of Legendre polynomials).}
\label{table:superhedgegauss}	
\end{table}

\begin{table}[h!tb]
\centering
	\begin{tabular}{| c | c | c | c | c | c |}
	\hline
	\diagbox{Strike}{Number of points} & 75 & 250 & 500 & 1000 & 2000 \\ \hline 
	0.6 & 0.4150 & 0.4157 & 0.4157 & 0.4157 & 0.4157 \\ \hline
	0.7 & 0.3245 & 0.3256 & 0.3257 & 0.3257 & 0.3257 \\ \hline
	0.8 & 0.2397 & 0.2411 & 0.2413 & 0.2413 & 0.2414 \\ \hline
	0.9 & 0.1716 & 0.1743 & 0.1746 & 0.1746 & 0.1746 \\ \hline
	1.0 & 0.1463 & 0.1487 & 0.1489 & 0.1490 & 0.1490 \\ \hline
	1.1 & 0.1795 & 0.1814 & 0.1817 & 0.1817 & 0.1817 \\ \hline
	1.2 & 0.2514 & 0.2538 & 0.2539 & 0.2539 & 0.2539 \\ \hline
	1.3 & 0.3376 & 0.3395 & 0.3396 & 0.3397 & 0.3397 \\ \hline
	1.4 & 0.4300 & 0.4315 & 0.4316 & 0.4316 & 0.4316 \\ \hline
	\end{tabular}
\caption{Optimal values of the super-hedging dual problem (using a uniform grid).}
\label{table:superhedgeuniform}	
\end{table}

\begin{figure}[h!tb] 
\centering
\includegraphics[scale=0.6]{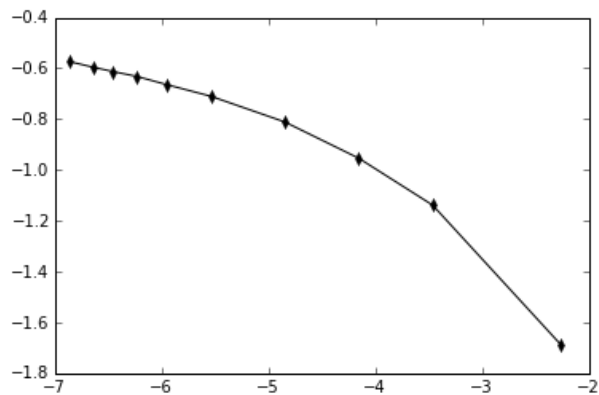}
\caption{Plot of $\log(\eps(n))/\log(d_n)$ against $\log(d_n)$, where $\eps(n)$ is the sup-norm error.}
\label{fig:ConvergenceLog}
\end{figure}

\newpage

\section{Numerical analysis of the transport plans}\label{sec:numericTPs}

As mentioned in Section~\ref{sec:primalnumerics}, the key risk for the at-the-money forward-start straddle is that a long position is equivalent to being short the kurtosis of the conditional distribution.
The solution in the lower bound case (under Assumption~\ref{assump:munuassump}) was detailed in Section~\ref{sec:primalnumerics}, 
where -- intuitively -- the transport plan maximises the kurtosis of the conditional distribution.
In the upper bound case (see~\cite{HN08}) the support of the transport plan is concentrated on a binomial map with no mass being left in place, 
i.e. all the mass of~$\mu$ gets mapped to~$\nu$ via two increasing, continuous and differentiable functions
$f,g:\RR_+\to\RR_{+}$ satisfying $f(x)\leq x \leq g(x)$. 
Functions $f$ and $g$ must satisfy the system of integral equations \cite[Equations~(5.19) and (5.20)]{HN08}
\begin{equation*} 
\left\{
\begin{array}{ll}
\displaystyle 0 = \int_{g^{-1}(y)}^{f^{-1}(y)}{\frac{\left(g(z) - f^{-1}(y)\right)}{g(z) - f(z)}\D z},\\
\displaystyle 1 = \int_{g^{-1}(y)}^{f^{-1}(y)}{\frac{1}{g(z) - f(z)}\D z}, 
\end{array}
\right.
\end{equation*}
and if it is possible to find a solution to this system then \cite[Lemma~7.1]{HN08} provides an optimality result in the at-the-money case.
Intuitively in this case the solution minimises the kurtosis of the conditional distribution.
For out-of-the-money options the situation is more subtle.
As the strike moves further away from the money, a long option position becomes longer the kurtosis of the conditional distribution.
Intuitively one would then expect the transport plan to be some combination of the lower and upper at-the-money transport plans discussed above.
Using the lognormal example of Section~\ref{sec:OTnumerics}, 
we now numerically solve for the transport plans using the discretised primal problem~\eqref{eq:DiscretisedPrimal}
and make qualitative conjectures concerning the structure of the transport plans.
The supports of the discretised random variables~$\widetilde{S}_t$ and~$\widetilde{S}_{t+\tau}$
(introduced in Section~\ref{sec:DualDiscretisation})
are taken respectively as~$[0,10]$ (with $m=1000$ points) and~$[0,30]$ (with $n=3000$ points);
the vectors of observed strikes are again
$\Km^{\xx} = \Km^{\yy} = \left\{0.3, 0.4, 0.5,\ldots,2\right\}$,
and the discrete probabilities in each bucket were obtained by integrating the lognormal density over each bucket.

\begin{figure}[h!tb] 
\centering
\includegraphics[scale=0.6]{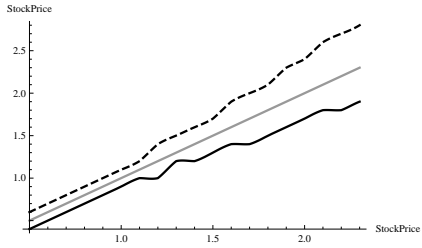}
\caption{The dashed and dark lines are the transport maps for the upper bound at-the-money case
and the grey line is the identity.
The horizontal and vertical axes are $S_t$ and $S_{t+\tau}$.  }
\label{fig:UpperBoundATM}
\end{figure}

\begin{figure}[h!tb] 
\centering
\mbox{\subfigure[Lower Bound.]{\includegraphics[scale=0.6]{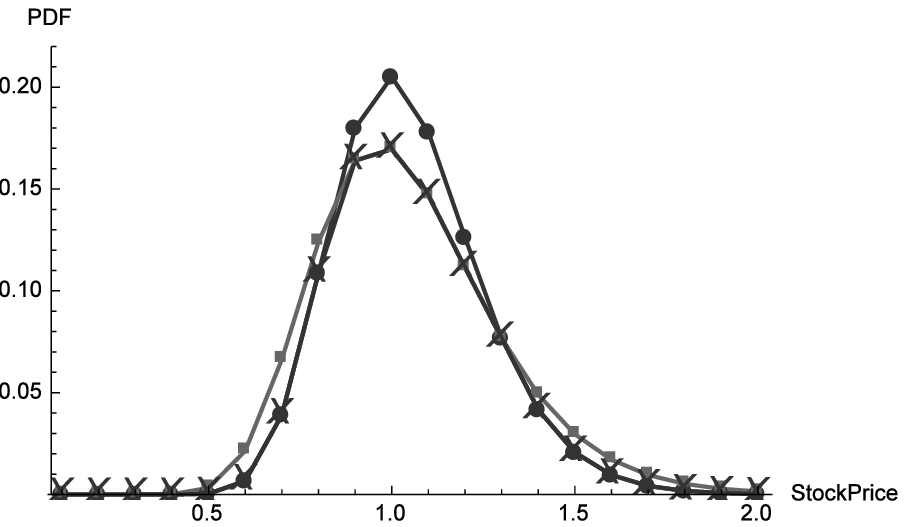}}}
\mbox{\subfigure[Transport maps]{\includegraphics[scale=0.6]{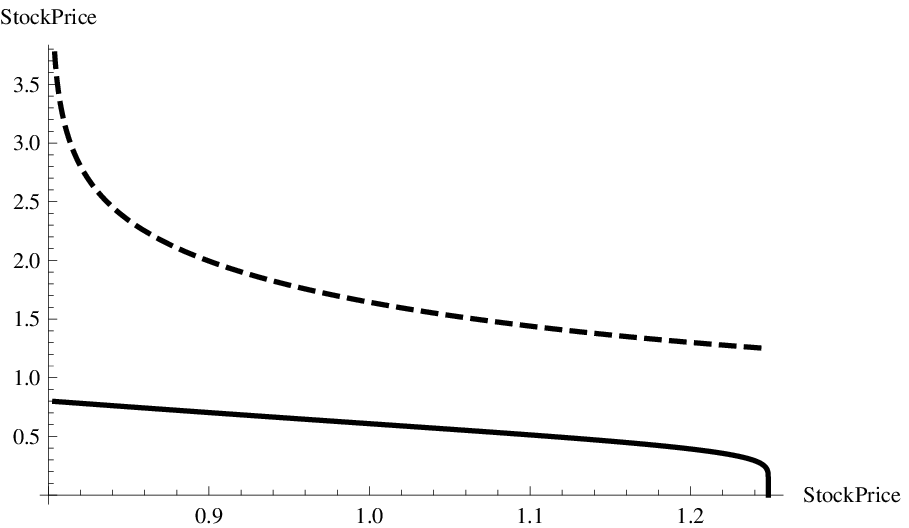}}}
\caption{Here the strike is taken at the money $\Kf=1$.
(a) Discretisation of the measures~$\mu$ (circles) and~$\nu$ (squares) and the amount of mass that must be left in place (X's) in the transport plan for the at-the-money ($\Kf=1$) lower bound case.
(b) Transport maps for the residual mass, computed from~\eqref{eq:ODEpq} or, equivalently, 
by solving the primal problem~\eqref{eq:DiscretisedPrimal}.
}
\label{fig:LowerBoundATM}
\end{figure}

\begin{figure}[h!tb] 
\centering
\mbox{\subfigure[Mass in place $\Kf=0.9$: Upper Bound.]{\includegraphics[scale=0.6]{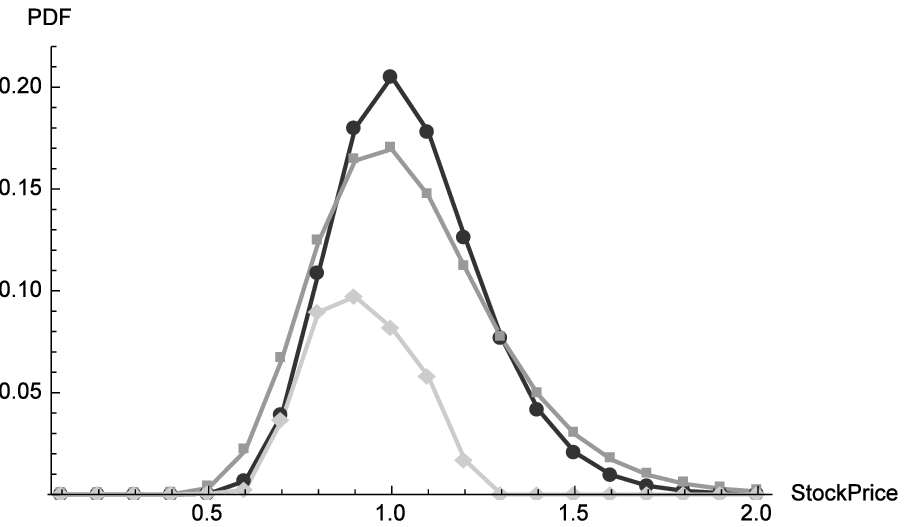}}\quad
\subfigure[Transport Maps $\Kf=0.9$: Upper Bound.]{\includegraphics[scale=0.6]{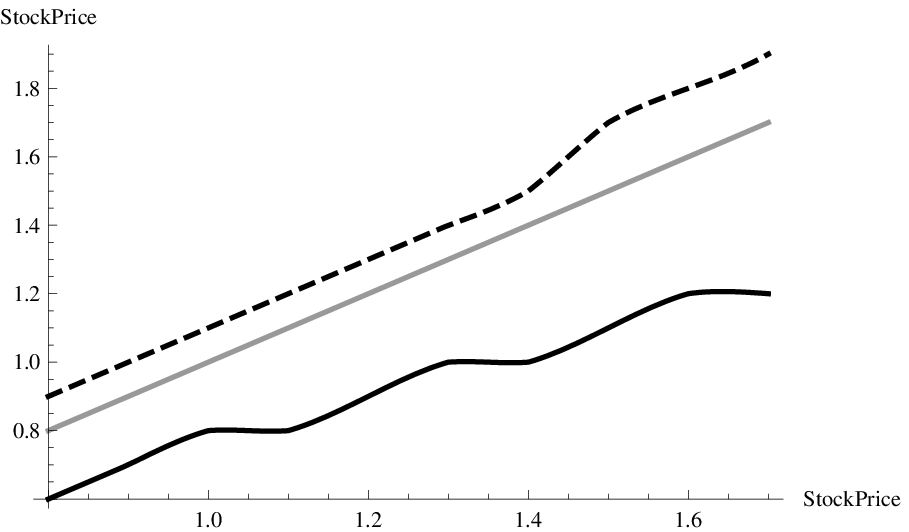}}}
\caption{(a) Discretisation of the measures $\mu$ (circles), $\nu$ (squares) 
and the amount of mass that must be left in place (diamonds) in the transport plan for the $\Kf=0.9$ upper bound case.
(b) Transport maps for the residual mass: the axes are labelled as in Figure~\ref{fig:UpperBoundATM}. }
\label{fig:TransportUB0.9}
\end{figure}

\begin{figure}[h!tb] 
\centering
\mbox{\subfigure[Mass in place $\Kf=0.7$: Upper Bound.]{\includegraphics[scale=0.6]{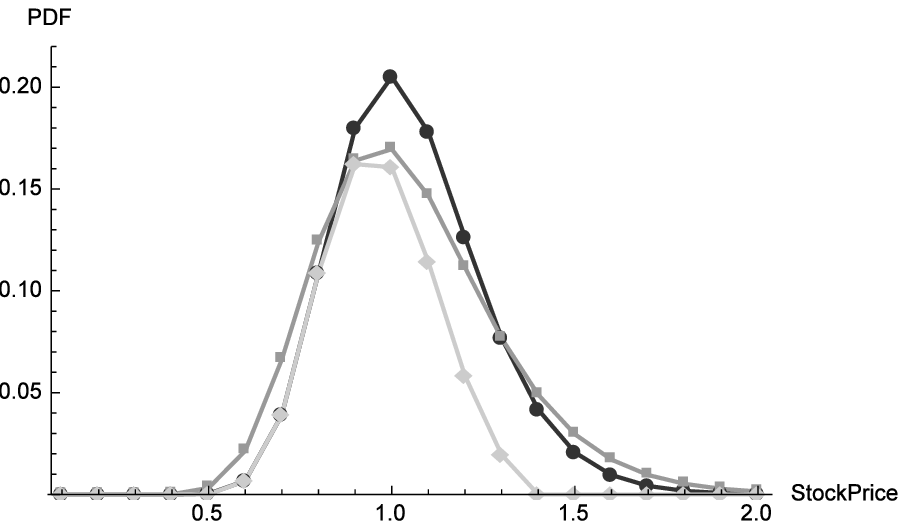}}\quad
\subfigure[Transport Maps $\Kf=0.7$: Upper Bound.]{\includegraphics[scale=0.6]{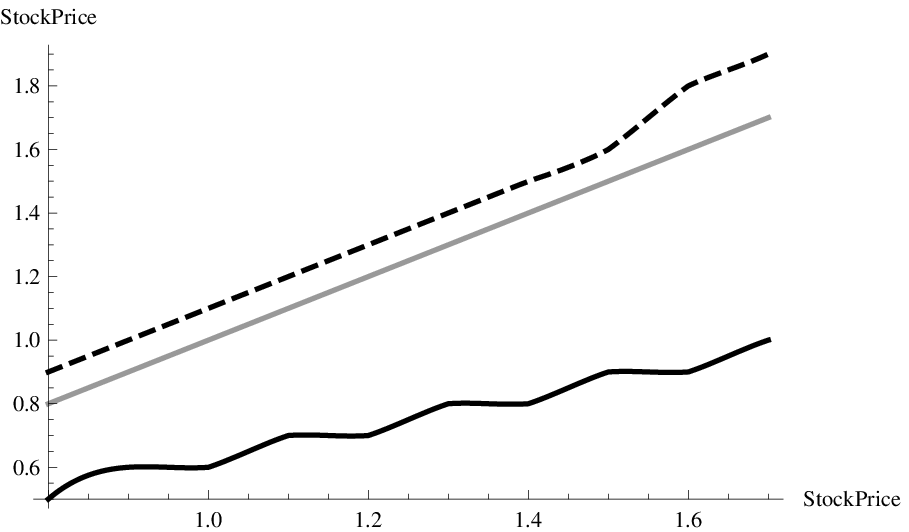}}}
\caption{(a) Discretisation of the measures $\mu$ (circles), $\nu$ (squares) 
and the amount of mass that must be left in place (diamonds) in the transport plan for the $\Kf=0.7$ upper bound case.
(b) Transport maps for the residual mass: the axes are labelled as in Figure~\ref{fig:UpperBoundATM}. }
\label{fig:TransportUB0.7}
\end{figure}

\begin{figure}[h!tb] 
\centering
\mbox{\subfigure[Mass in place $\Kf=1.05$: Lower Bound.]{\includegraphics[scale=0.6]{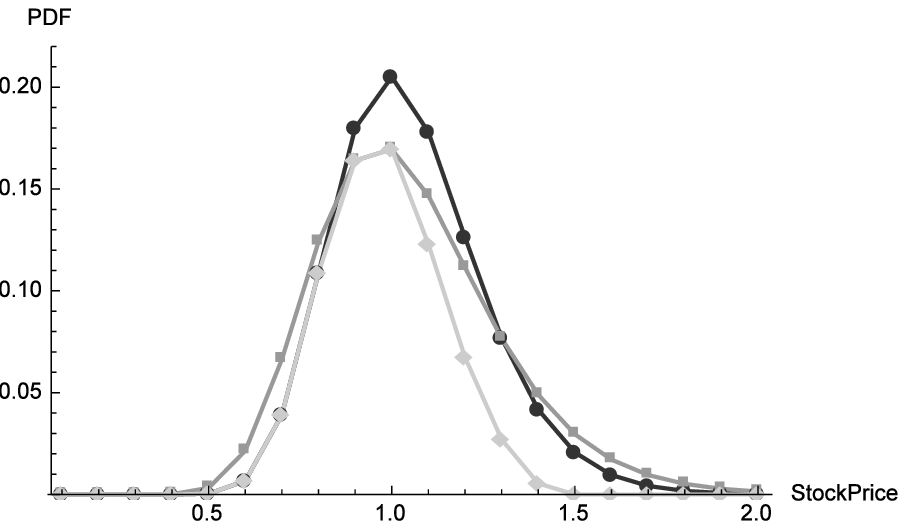}}\quad
\subfigure[Transport Maps $\Kf=1.05$: Lower Bound.]{\includegraphics[scale=0.6]{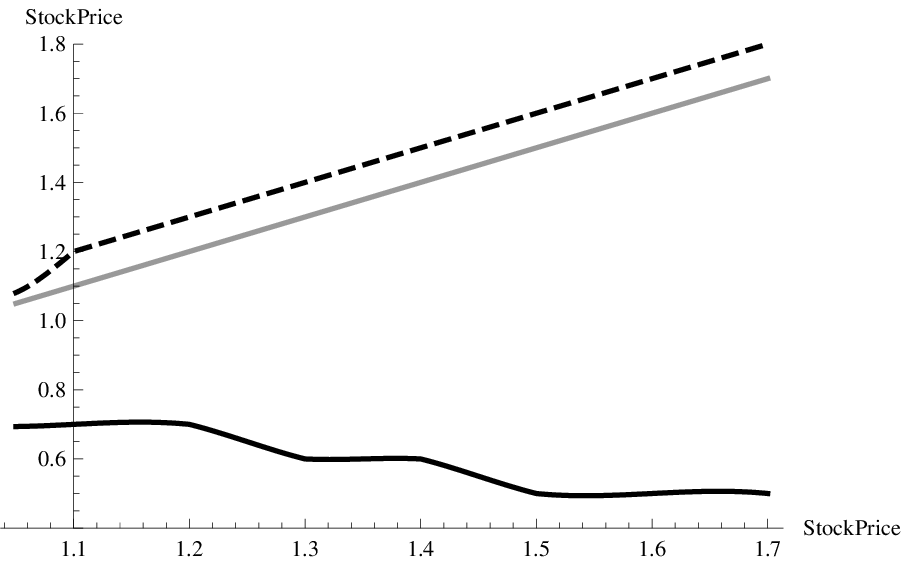}}}
\caption{(a) Discretisation of the measures $\mu$ (circles), $\nu$ (squares) 
and the amount of mass that must be left in place (diamonds) in the transport plan for the $\Kf=1.05$ lower bound case.
(b) Transport maps for the residual mass: the axes are labelled as in Figure~\ref{fig:UpperBoundATM}. }
\label{fig:TransportLB1.05}
\end{figure}

\begin{figure}[h!tb] 
\centering
\mbox{\subfigure[Mass in place $\Kf=1.3$: lower bound.]{\includegraphics[scale=0.6]{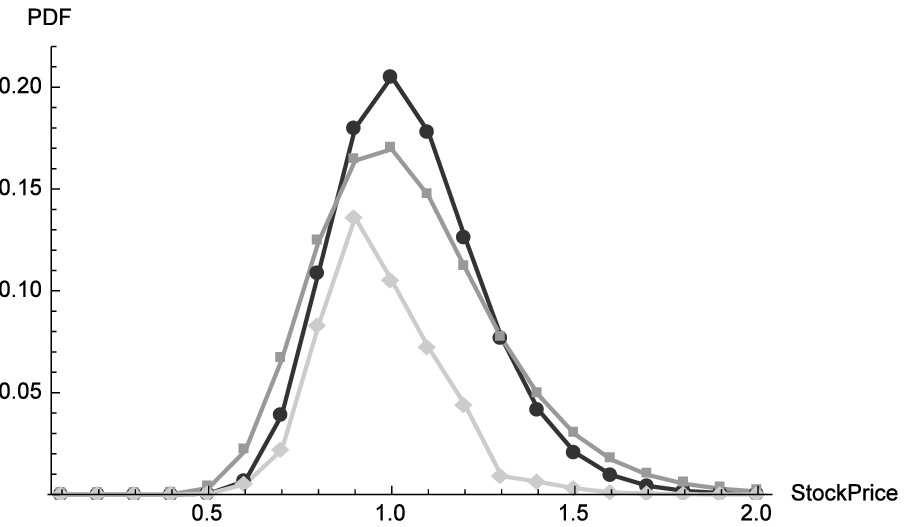}}\quad
\subfigure[Transport maps $\Kf=1.3$: lower bound.]{\includegraphics[scale=0.6]{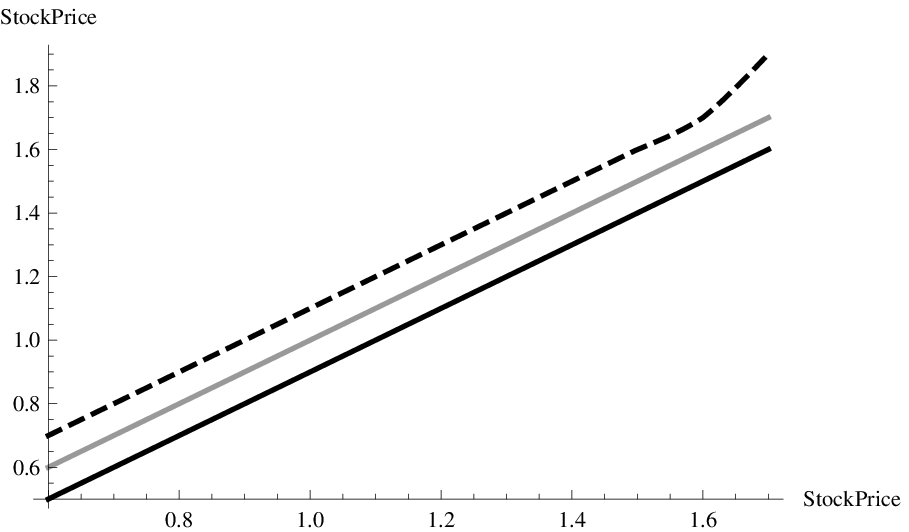}}}
\caption{(a) Discretisation of the measures $\mu$ (circles), $\nu$ (squares) 
and the amount of mass left in place (diamonds) in the transport plan for the $\Kf=1.3$ lower bound case.
(b) Transport maps for the residual mass: the axes are labelled as in Figure~\ref{fig:UpperBoundATM}. }
\label{fig:TransportLB1.3}
\end{figure}

In Figure~\ref{fig:UpperBoundATM} we compute the transport maps~$f$
and~$g$ for the at-the-money upper bound case, i.e. the supremum case in~\eqref{eq:primal}.
In this case no mass is left in place in the transport plan;
in Figure~\ref{fig:LowerBoundATM} we plot the transport plan for the at-the-money lower bound case.
This lower bound case is in striking agreement with Hobson-Klimmek: 
as much mass as possible is left in place and the residual mass is mapped to the tails of the distribution via two decreasing functions.
Note the agreement with the transport maps in Figure~\ref{fig:BSPDF}(b).
In this case the forward volatility is $6.92\%$ matching the Hobson-Klimmek analytical solution and the numerical solution of the dual.
Figures~\ref{fig:TransportUB0.9} and~\ref{fig:TransportUB0.7} illustrate the transport plan for the upper bound case and strikes $\Kf=0.7$ and $\Kf=0.9$.
As the strike decreases from at-the-money, more and more mass is left in place 
(starting from the left tail), and the residual mass of $\mu$ is mapped to $\nu$ via two increasing functions;
one maps the residual mass to the left tail of $\nu$ while the other maps the residual mass to the right tail of $\nu$. 
For strikes greater than at-the-money a mirror-image transport plan emerges 
where more and more mass is left in place (starting from the right tail) 
and again the residual mass of $\mu$ is mapped to $\nu$ via two increasing functions (for brevity we omit the plots).
Figures~\ref{fig:TransportLB1.05} and~\ref{fig:TransportLB1.3} illustrate the transport plan for the lower bound case and strikes $\Kf=1.05$ and $\Kf=1.3$.
As the strike increases from at-the-money,  less and less mass is left in place 
(removing mass first from the right tail) and the residual mass of $\mu$ is mapped to $\nu$ via two functions:
one maps the residual mass to the left tail of $\nu$, the other maps the residual mass to the right tail of $\nu$. 
These functions appear to be increasing for large strikes (Figure~\ref{fig:TransportLB1.3}(b)), 
but since the transport maps are decreasing for the at-the-money strike (Figure~\ref{fig:LowerBoundATM}(b)),
for strikes close to the money these maps could be decreasing~\ref{fig:TransportLB1.05}(b).
For strikes lower than the money a mirror-image transport plan emerges 
where less and less mass stays in place (removing mass first from the left tail) 
and again the residual mass of $\mu$ is mapped to $\nu$ via two functions (for brevity we omit the plots).

\section{Summary and Conclusion}
In this article, we endeavoured to provide a quantitative answer to the question as to whether 
forward-start options could be effectively replicated using Vanilla products.
The take-away message here is that they should rather be thought of as fundamental building blocks, 
and that trying to replicate them using European options is not reasonable.
Our approach using infinite linear programming arguments, 
makes the methodology directly amenable to computation and calibration to market data;
we in particular propose a discretisation scheme to reduce its (infinite) dimensionality, and therefore its complexity.
Alternatively, in line with the current active research on robust finance, 
one could rephrase the problem---using duality arguments---into 
an optimisation over sets of (martingale) measures, consistent with market data. 
Several results already exist in that direction, and we leave this for further study.


\end{document}